\newtheorem{prob-statement}{Problem}
\newtheorem{lemma}{Lemma}
\newtheorem{lemma-app}{Lemma}[section]
\newtheorem{thrm}{Theorem}
\newtheorem{prop}{Proposition}
\newtheorem{claim}{Claim}
\DeclareMathOperator*{\argmax}{arg\,max}
\DeclareMathOperator*{\conv}{conv}
\begin{document}

\title{Design of Binary Quantizers for Distributed Detection under Secrecy Constraints}

\author{V.~Sriram~Siddhardh~Nadendla,~\IEEEmembership{Student Member,~IEEE,}
        and~Pramod~K.~Varshney,~\IEEEmembership{Fellow,~IEEE}

\thanks{V. Sriram Siddhardh Nadendla and Pramod K. Varshney are with the Department
of Electrical Engineering and Computer Science, Syracuse University, Syracuse, NY 13244, USA. E-mail: \{vnadendl, varshney\}@syr.edu.}


}


\maketitle


\begin{abstract}
In this paper, we investigate the design of distributed detection networks in the presence of an eavesdropper (Eve). We consider the problem of designing binary quantizers at the sensors that maximize the Kullback-Leibler (KL) Divergence at the fusion center (FC), subject to a tolerable constraint on the KL Divergence at Eve. In the case of i.i.d. received symbols at both the FC and Eve, we prove that the structure of the optimal binary quantizers is a likelihood ratio test (LRT). We also present an algorithm to find the threshold of the optimal LRT, and illustrate it for the case of Additive White Gaussian Noise (AWGN) observation models at the sensors. In the case of non-i.i.d. received symbols at both FC and Eve, we propose a dynamic-programming based algorithm to find efficient quantizers at the sensors. Numerical results are presented to illustrate the performance of the proposed network design. 
\end{abstract}


\begin{IEEEkeywords}
Distributed Detection, Wireless Sensor Networks, Eavesdroppers, Kullback-Leibler Divergence, Secrecy.
\end{IEEEkeywords}


\IEEEpeerreviewmaketitle


\section{Introduction \label{sec: Introduction}}
Distributed detection has been a well-studied topic over the past three decades, with a wide range of applications ranging from civilian to military purposes\cite{Book-Swami,Viswanathan1997,Blum1997,Book-Varshney,Veeravalli2011}. A distributed detection network comprises of a network of spatially distributed sensors that observe the phenomenon-of-interest (PoI) and send processed information to a fusion center (FC) where a global decision is made regarding the presence or absence of the PoI. In order to design a distributed detection network, the designer needs to choose an appropriate set of sensor quantizers and the fusion rule in the network. Tsitsiklis and Athans showed that the joint design of an optimal distributed detection network is NP-Hard \cite{Tsitsiklis1985}, in general. Therefore, the problem is often decomposed into two design problems \cite{Hoballah1986}, where the problems of the design of sensor quantizers and the fusion rule are considered separately. For example, the optimal fusion rule for a set of known and conditionally independent sensor quantizers is given by the Chair-Varshney rule \cite{Chair1986}. In the presence of a large number of sensors where the fusion rule can be abstracted out by adopting error-exponents as performance metrics,
several attempts have been made to analyze and design sensor quantizers in the past \cite{Tsitsiklis1986,Longo1990,Tsitsiklis1993,Hashlamoun1996,Rago1996,Zhang2002,Liu2006,Wang2013} under different scenarios in the absence of an eavesdropper. In this paper, we address the problem of designing optimal local quantizers in the presence of an eavesdropper, with Kullback-Leibler (KL) Divergence as the design metric.


In the past, a few attempts have been made to address the problem of eavesdropping threats by designing ciphers in the broader context of sensor networks. For example, Aysal \emph{et al.} in \cite{Aysal2008} investigated the problem of secure distributed estimation by incorporating a stochastic cipher in the existing sensor networks to improve secrecy. They showed a significant deterioration in Eve's performance (in terms of bias and mean squared error) at the cost of a marginal increase in the estimation variance at the FC. A similar attempt has been made in the context of distributed detection in sensor networks by Nadendla in \cite{Nadendla-Thesis}, where the author presented an optimal network (sensor quantizers, flipping probabilities in the stochastic cipher and the fusion rule) that minimizes the error probability at the FC in the presence of a constraint on Eve's error probability. In \cite{Jeon2011}, Jeon \emph{et al.} proposed a cooperative transmission scheme for a sensor network where the sensors are partitioned into non-flipping, flipping and dormant sets, based on the thresholds dictated by the FC. The non-flipping set of sensors quantize the sensed data and transmit them to the FC, while the flipping sensors transmit flipped decisions in order to confuse the Eve. The sensors within the dormant set sleep, in order to conserve energy and we have an energy-efficient sensor network with longer lifetime.

In all of the above attempts, security in distributed detection systems was incorporated as an afterthought in that separate security blocks were added after the original system had been designed without considering the possible security threats. Marano \emph{et al.} in \cite{Marano2009a}, on the other hand, investigated the problem of designing optimal decision rules for a censoring sensor network in the presence of eavesdroppers. Although their framework of censoring sensor networks is more general, they assume that the Eve can only determine whether an individual sensor transmits its decision or not. In reality, Eve can extract more information than just merely determining the presence or absence of transmission, and hence can make a reasonably good decision regarding the PoI, based on its receptions. Therefore, in our preliminary work in \cite{Nadendla2010a}, we investigated the problem of designing sensor quantizers for a distributed detection network that maximize the difference in the KLDs at the FC and Eve. Note that the objective considered in \cite{Nadendla2010a}, namely the difference in KLDs at the FC and Eve, does not constrain the Eve's performance. Consequently, Eve may acquire an intolerable amount of information from the sensors, and therefore, the solution (quantizer design) provided in \cite{Nadendla2010a} may not be attractive to the network designer in many practical scenarios.

In this paper, we consider a distributed detection network in the presence of binary symmetric channels (BSCs) between the sensors and the FC, as well as those between the sensors and the Eve, whose transition probabilities are known to the network designer. In contrast to our work in \cite{Nadendla2010a} where the goal was to design binary quantizers that maximize the difference in the KLDs at the FC and Eve, in this paper, we design optimal binary sensor quantizers that maximize KL Divergence at the FC while constraining the Eve's KL Divergence to a prescribed tolerance level. We consider two scenarios, one where the channels between the sensors and the FC (likewise, channels between sensors and the Eve) are identical, and the second where the channels are non-identical. In the identical channel scenario, we assume that the Eve has \emph{noisier} channels than the FC's channels, and show that the structure of the optimal quantizer at the local sensors is a \emph{likelihood ratio test} (LRT). We present an illustrative example where we assume that the sensors make noisy observations of a known deterministic signal. We present an algorithm to find the optimal threshold so as to maximize the KL Divergence at the FC while ensuring that the Eve's KL Divergence remains within tolerable limits. In the scenario where channels are non-identical, we decompose the problem into $N$ subproblems to be solved sequentially using dynamic programming. Consequently, we decouple the Eve's constraint into $N$ individual constraints, thus allowing us to solve each of these decoupled problems as in the identical sensor case.


The remainder of the paper is organized as follows. In Section \ref{sec: System model}, we present the system framework, introduce the design metrics and state the problem considered in this paper. Then, in Section \ref{sec: Identical Quantizer Design}, we consider the scenario where all the channels to the FC are identical and so are the channels to Eve. We present fundamental tools regarding transformations in the receiver operating characteristics (ROC) of a given sensor in Appendix \ref{sec: Transformations}. These are necessary to address this scenario. For the sake of illustration, we present an example where we assume that the sensors make noisy observations of a deterministic signal, and present an algorithm to find the optimal threshold for the LRT in the presence of Eve. Numerical results are also presented where we discuss the tradeoff between the network performance and tolerable secrecy. 
In Section \ref{sec: Non-Identical Quantizer Design}, we consider a more general problem setup where the design of non-identical sensor thresholds is considered in the presence of independent, but non-identically distributed sensor observations and non-identical channels. Our concluding remarks are presented in Section \ref{sec: Conclusion}.


\section{System Model and Problem Formulation \label{sec: System model}}

\begin{figure}[!t]
	\centering
    \includegraphics[width=3in]{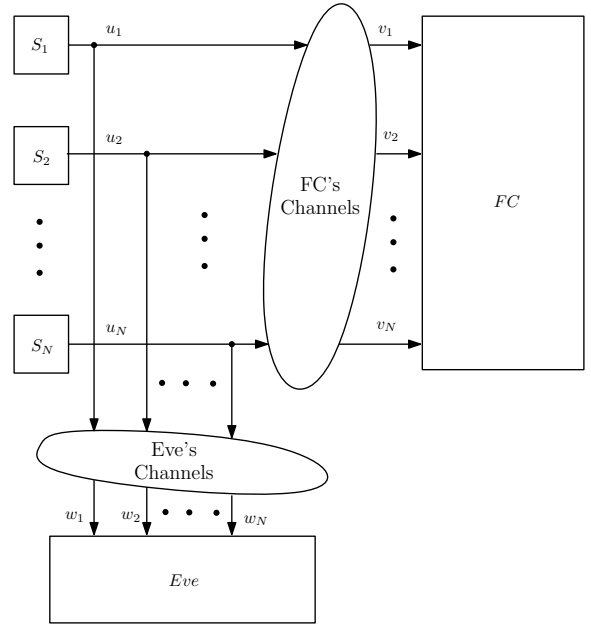}
    \caption{Sensor Network Model}
    \label{Fig: model}
\end{figure}

Consider a binary-hypothesis testing problem for distributed detection with $N$ sensors under the Neyman Pearson framework, as shown in Figure \ref{Fig: model}. Let $\mathbf{r}_i = \{ r_{i,t} : t = 1, \cdots, T \}$ denote a sequence of i.i.d. observations (in time) acquired by the $i^{th}$ sensor over $T$ time periods. Furthermore, we also assume that these observations $\mathbf{r}_i$ are independent across sensors, i.e., for $i = 1, \cdots, N$, but do not necessarily have identical distributions at different sensors. Let $H_0$ and $H_1$ denote the null and the alternate hypotheses respectively. We denote the conditional probability density functions of $r_{i,t}$ under hypotheses $H_0$ and $H_1$ as $p_{i,0}(r) = p(r_{i,t} = r|H_0)$ and $p_{i,1}(r) = p(r_{i,t} = r|H_1)$ respectively. In this paper, for all $i = 1, \cdots, N$, we assume that the $i^{th}$ sensor employs binary quantization to compress its observation $r_{i,t}$ into $u_{i,t}$, as defined below, using a decision rule $\gamma_i(\cdot)$.
\begin{equation}
    u_{i,t} = \gamma_i(r_{i,t}) = 
    \begin{cases}
    	1, & \mbox{where } \Lambda(r_{i,t}) \geq \lambda_i
        \\
       	0, & \mbox{otherwise.}
    \end{cases}
    \label{Eqn: Quantizer-rule}
\end{equation}
where $\Lambda(r_{i,t})$ is a test-statistic and $\lambda_i$ is a suitable threshold to be designed.

Let $x_i = P(u_{i,t} = 1 | H_0)$ and $y_i = P(u_{i,t} = 1 | H_1)$ denote the false-alarm and detection probabilities at the $i^{th}$ sensor respectively. The pair $(x_i, y_i)$ is traditionally referred to, as the operating point of the $i^{th}$ sensor, which can lie anywhere on the compact\footnote{In this context, compactness of the unit-square corresponds to the inclusion of the boundary points (0,0), (0,1), (1,0) and (1,1) within the set itself.} unit-square $\mathcal{U} = [0,1]^2$, which we call the \emph{ROC space}. For any fixed test-statistic $\Lambda(\cdot)$, when the threshold $\lambda_i$ is varied, the operating point of the $i^{th}$ sensor follows a curve $y_i = g_{\Lambda}(x_i)$. This curve $y_i = g_{\Lambda}(x_i)$ is traditionally known as the \emph{ROC curve}. In the rest of the paper, we use the operating point $(x_i,y_i)$ to represent the quantizer rule $\gamma_i$ employed at the $i^{th}$ sensor. Two quantizers $\gamma_1$ and $\gamma_2$ are considered identical (equivalent), if their operating points $(x_1, y_1)$ and $(x_2, y_2)$ are the same. 

Let $\Gamma_i$ denote the set of all feasible\footnote{The feasibility of an operating point is primarily dictated by the quality of the sensing observations. Note that the size of $\Gamma_i$ diminishes as the sensor observations get corrupted due to multipath fading and/or thermal noise.} operating points $(x_i,y_i)$ at the $i^{th}$ sensor. Then, the region $\Gamma_i$ in the ROC space is upper-bounded by the set of operating points corresponding to the likelihood ratio tests (LRTs). We call this boundary as the \emph{LRT curve}, and denote it as $y_i = g_{LRT_i}(x_i)$. Furthermore, we restrict our analysis only to those operating points that lie above the line $y_i = x_i$ in the ROC plane. This is because any point below the line $y_i = x_i$ contributes negatively to the overall performance in terms of error probability at the FC. In summary, the region $\Gamma_i$ in the ROC space is upper-bounded by the LRT curve $y_i = g_{LRT_i}(x_i)$, and lower-bounded by the line $y_i = x_i$.


Given the operating point $(x_i, y_i)$, the Kullback-Leibler (KL) Divergence of the $i^{th}$ sensor is defined as follows.
\begin{equation}
	D_i = \displaystyle x_i \log \frac{x_i}{y_i} + (1 - x_i) \log \frac{1 - x_i}{1 - y_i}
	\label{Eqn: KLD - sensor}
\end{equation}

Let $\Upsilon = \{0,1\}^N$ denote the $N$-dimensional space of compressed symbols $\mathbf{u}_t = \{ u_{1,t}, \cdots, u_{N,t} \}$ at all the sensors at a given time $t$. In this paper, we assume that the $i^{th}$ sensor transmits its compressed symbols $u_{i,t}$ to the FC through a binary-symmetric channel (BSC) with transition probability $\rho_{fc_i}$. In our model, we also assume that an eavesdropper wiretaps each of these sensor transmissions through a BSC with transition probability $\rho_{e_i}$. 

If $\mathbf{v}_i = \{ v_{1, t}, \cdots, v_{N,t} \}$ and $\mathbf{w}_i = \{ w_{1, t}, \cdots, w_{N,t} \}$ denote the received symbols at the FC and Eve respectively, the operating point $(x_i, y_i)$ at the $i^{th}$ sensor gets transformed into $(x_{fc_i}, y_{fc_i})$ and $(x_{e_i}, y_{e_i})$ at the FC and Eve respectively, which are given as follows.
\begin{subequations}
	\begin{equation}
    	\label{Eqn: Eve-ROC-x-fc-def}
    	x_{fc_i} = P(v_{i,t} = 1|H_0) = \rho_{fc_i} + (1 - 2 \rho_{fc_i}) x_i
	\end{equation}
	\begin{equation}
    	\label{Eqn: Eve-ROC-y-fc-def}
    	y_{fc_i} = P(v_{i,t} = 1|H_1) = \rho_{fc_i} + (1 - 2 \rho_{fc_i}) y_i
	\end{equation}
	\begin{equation}
    	\label{Eqn: Eve-ROC-x-e-def}
        x_{e_i} = P(w_{i,t} = 1|H_0) = \rho_{e_i} + (1 - 2 \rho_{e_i}) x_i
	\end{equation}
	\begin{equation}
    	\label{Eqn: Eve-ROC-y-e-def}
        y_{e_i} = P(w_{i,t} = 1|H_1) = \rho_{e_i} + (1 - 2 \rho_{e_i}) y_i
	\end{equation}
\end{subequations}

Let the contributions of the $i^{th}$ sensor to the overall KL Divergence at the FC and Eve be denoted as $D_{FC_i}$ and $D_{E_i}$ respectively. Then, $D_{FC_i}$ and $D_{E_i}$ are defined as follows.
\begin{equation}
    \begin{array}{lcl}
        \displaystyle D_{FC_i} & = & \displaystyle x_{fc_i} \log \left( \frac{x_{fc_i}}{y_{fc_i}} \right) + (1 - x_{fc_i}) \log \left( \frac{1 - x_{fc_i}}{1 - y_{fc_i}} \right)
        \\
        \\
        \displaystyle D_{E_i} & = & \displaystyle x_{e_i} \log \left( \frac{x_{e_i}}{y_{e_i}} \right) + (1 - x_{e_i}) \log \left( \frac{1 - x_{e_i}}{1 - y_{e_i}} \right).
    \end{array}
\end{equation}

Let $\mathcal{A}_T^{FC}, \mathcal{A}_T^E \in \Upsilon^T$ denote the acceptance regions of the hypothesis $H_1$ at FC and Eve respectively, over a time-window $t = 1, \cdots, T$. Then, the global probabilities of false alarm and miss at the FC and Eve are given by
\begin{equation}
    \begin{array}{cc}
        p_T^{FC} = Pr(\mathbf{v_i} \in \mathcal{A}_T^{FC} | H_0), & q_T^{FC} = Pr(\mathbf{v_i} \in \mathcal{\overline{A}}^{FC}_T | H_1).
        \\
        \\
        p_T^E = Pr(\mathbf{w_i} \in \mathcal{A}_T^E | H_0), & q_T^E = Pr(\mathbf{w_i} \in \mathcal{\overline{A}}^E_T | H_1).
    \end{array}
\end{equation}
where $\mathcal{\overline{A}}^{FC}_T$ and $\mathcal{\overline{A}}^{E}_T$ are the rejection regions of the hypothesis $H_1$ at the FC and Eve respectively, and, $\mathbf{v_i} = \{ v_{i,1}, \cdots, v_{i,T} \}$ and $\mathbf{w_i} = \{ w_{i,1}, \cdots, w_{i,T} \}$ are the received symbols at the FC and Eve respectively, transmitted by the $i^{th}$ sensor over a time window of length $T$. Next, we present Stein's Lemma that addresses the asymptotic properties of the global probability of miss $q_{T}^{FC}$.

\begin{lemma}[\emph{Stein's Lemma} \cite{Cover}]
    For any $0 < \delta, \varphi < \frac{1}{2}$, let $q_{T, \delta}^{FC} = \displaystyle \min_{p_T^{FC} < \delta} q_T^{FC}$ and $q_{T, \varphi}^E = \displaystyle \min_{p_T^E < \varphi} q_T^E$. Then, we have
    \begin{equation}
        \begin{array}{lcl}
            \displaystyle \lim_{\delta \rightarrow 0} \lim_{T \rightarrow \infty} - \frac{1}{T} \log q_{T, \delta}^{FC} & = & \mathcal{D}_{FC}
            \\
            \\
            \displaystyle \lim_{\varphi \rightarrow 0} \lim_{T \rightarrow \infty} - \frac{1}{T} \log q_{T, \varphi}^E & = & \mathcal{D}_E
        \end{array}
    \end{equation}
    where $\mathcal{D}_{FC}$ and $\mathcal{D}_E$ are the KL divergences at the FC and Eve respectively, which are defined as follows.
	\begin{equation}
		\begin{array}{lcr}
	    	\displaystyle \mathcal{D}_{FC} = \displaystyle \sum_{i = 1}^N D_{FC_i} & \mbox{and} & \displaystyle \mathcal{D}_E = \displaystyle \sum_{i = 1}^N D_{E_i}.
		\end{array}
		\label{Eqn: KLD-FC-Eve}
	\end{equation}
\end{lemma}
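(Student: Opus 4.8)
This is the classical Chernoff--Stein lemma specialized to the network at hand, so the plan is to reduce it to the standard i.i.d. binary hypothesis testing statement and then invoke the relative-entropy typical-set argument. The key preliminary observation is that the received data factorizes twice: over time the super-symbols $\mathbf{u}_t = (v_{1,t},\ldots,v_{N,t})$ are i.i.d., and across sensors they are mutually independent. Hence under $H_0$ and $H_1$ the per-time-step laws on $\Upsilon$ are the product measures $P_0 = \prod_{i=1}^N P_{0,i}$ and $P_1 = \prod_{i=1}^N P_{1,i}$, and by additivity of relative entropy over products the single-letter divergence is
\begin{equation}
    D(P_0 \| P_1) = \sum_{i=1}^N D(P_{0,i}\|P_{1,i}) = \sum_{i=1}^N D_{FC_i} = \mathcal{D}_{FC},
\end{equation}
with the analogous identity for $\mathcal{D}_E$ at Eve. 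This turns the claim into the assertion that the best Type-II (miss) error exponent, subject to a vanishing Type-I (false-alarm) constraint, equals this single-letter divergence.

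For achievability I would exhibit one decision rule that attains the exponent. Writing $P_0^T, P_1^T$ for the $T$-fold products and $L_T(\mathbf{v}) = \frac{1}{T}\log \frac{P_0^T(\mathbf{v})}{P_1^T(\mathbf{v})}$ for the normalized log-likelihood ratio, I define the relative-entropy typical set $A_T^\epsilon = \{\mathbf{v} : |L_T(\mathbf{v}) - \mathcal{D}_{FC}| \le \epsilon\}$ and take it as the rejection region $\overline{\mathcal{A}}_T^{FC}$ for $H_1$. By the weak law of large numbers $L_T \to \mathcal{D}_{FC}$ in probability under $H_0$, so the false-alarm probability $p_T^{FC} = P_0^T\big((A_T^\epsilon)^c\big) \to 0$ and the constraint $p_T^{FC} < \delta$ is met for all large $T$; meanwhile, on $A_T^\epsilon$ one has $P_1^T(\mathbf{v}) \le P_0^T(\mathbf{v})\, e^{-T(\mathcal{D}_{FC}-\epsilon)}$, so that $q_{T,\delta}^{FC} \le e^{-T(\mathcal{D}_{FC}-\epsilon)}$ and $-\frac1T \log q_{T,\delta}^{FC} \ge \mathcal{D}_{FC} - \epsilon$.

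For the converse I would show no admissible rule beats this exponent: for any rejection region $B_T$ with $p_T^{FC} < \delta$, intersecting with $A_T^\epsilon$ and using the reverse bound $P_1^T(\mathbf{v}) \ge P_0^T(\mathbf{v})\, e^{-T(\mathcal{D}_{FC}+\epsilon)}$ gives $q_T^{FC} \ge e^{-T(\mathcal{D}_{FC}+\epsilon)}\big(1 - \delta - P_0^T((A_T^\epsilon)^c)\big)$, hence $-\frac1T \log q_{T,\delta}^{FC} \le \mathcal{D}_{FC} + \epsilon + o(1)$. Letting $T\to\infty$ and then $\epsilon,\delta \to 0$ sandwiches the exponent at $\mathcal{D}_{FC}$; the identical argument with $(\varphi, \rho_{e_i})$ in place of $(\delta, \rho_{fc_i})$ yields $\mathcal{D}_E$ at Eve. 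The only real care needed is the bookkeeping of the two nested limits and verifying that the independent-but-non-identical structure across sensors still produces an additive single-letter divergence; once that factorization is in hand the remainder is the textbook argument, which is why the statement is attributed directly to Cover.
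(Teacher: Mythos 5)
Your proof is correct, and it takes essentially the same route the paper relies on: the paper states this lemma without proof, citing Cover's Chernoff--Stein lemma, and your relative-entropy typical-set achievability/converse argument, combined with the observation that conditional independence across sensors and channels makes the single-letter divergence additive ($\mathcal{D}_{FC} = \sum_{i=1}^{N} D_{FC_i}$, and likewise for $\mathcal{D}_E$), is precisely that standard textbook proof adapted to this setting. No gap to report.
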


Thus, KL Divergence is the error exponent for the global probability of miss when the global probability of false alarm is constrained (and diminishing to zero with time). Therefore, as a surrogate to the global probability of miss, we choose KL Divergence as the performance metric in this paper. Note that $\mathcal{D}_{FC}$ and $\mathcal{D}_{E}$ are both convex functions of $\mathbf{x} = \{x_1, \cdots, x_N\}$ and $\mathbf{y} = \{y_1, \cdots, y_N\}$ in the hyper-cube $[0,1]^N$, which is made up of the ROC spaces of all the sensors in the detection network.

In this paper, we design a distributed detection network where $\mathcal{D}_{FC}$ is maximized while constraining $\mathcal{D}_E$ to a prescribed tolerance limit, denoted as $\alpha$. 
We present the formal problem statement and discuss the various scenarios that are addressed in this paper, as follows.
\begin{prob-statement}
	Find
    \begin{flalign*}
    	\displaystyle \argmax_{\boldsymbol\gamma} & \quad \mathcal{D}_{FC} \quad \mbox{ s.t.}
        \\ & \text{1. } \quad \mathcal{D}_E \ \leq \ \alpha
        \\ & \text{2. } \quad (x_i, y_i) \in \Gamma_i, \mbox{ for all } i = 1, \cdots, N.
    \end{flalign*}
	\label{Problem: General}
\end{prob-statement}

Note that Constraint 1 in the above problem statement becomes degenerate for large values of $\alpha$. More specifically, Problem \ref{Problem: General} is meaningful only when $0 \leq \alpha < \alpha^*$ so that it has a non-degenerate Constraint 1 in Problem \ref{Problem: General}. This critical value $\alpha^*$ is equal to Eve's KL Divergence $\mathcal{D}_E^*$, which Eve attains when FC attains the maximum KL Divergence $\mathcal{D}_{FC}^*$. This maximum KL Divergence $\mathcal{D}_{FC}^*$ can be found by solving Problem \ref{Problem: General} in the absence of Constraint 1.

Let $\mathcal{R} \triangleq \displaystyle \cap_{i = 1}^N \Gamma_i \cap \{ \ (\mathbf{x},\mathbf{y}) \ | \ \mathcal{D}_E \leq \alpha \ \}$ denote the search space in Problem \ref{Problem: General}. Note that $\{ (\mathbf{x},\mathbf{y}) \ | \ \mathcal{D}_E \leq \alpha \}$ is a convex level-set of $\mathcal{D}_E$ \cite{Book-Rockafeller1996}, because $\mathcal{D}_E$ is a convex function of $(\mathbf{x},\mathbf{y})$. Similarly, since LRTs are optimal in the absence of Eve (For a detailed proof, please refer to Proposition 4.1 in \cite{Tsitsiklis1993}), $\Gamma_i$ is also a convex set in the ROC space. Also, $\mathcal{R}$ is an intersection of two convex sets, and therefore, $\mathcal{R}$ is a convex set. 

Since $\mathcal{D}_{FC}$ is a convex function of $(\mathbf{x}, \mathbf{y})$, Problem \ref{Problem: General} is a convex maximization problem, and therefore, the optimal solution is one of the extreme points of $\mathcal{R}$ \cite{Book-Rockafeller1996}. Note that $\{ (\mathbf{x},\mathbf{y}) \ | \ \mathcal{D}_E \leq \alpha \}$ is not necessarily a subset of $\Gamma_i$, and therefore, the optimal set of binary quantizers need not necessarily be LRTs. 
Furthermore, the search space $\mathcal{R}$ in Problem \ref{Problem: General} is not a simple polytope. $\mathcal{R}$  is an intersection of two convex sets with smooth boundaries and therefore, its boundary does not necessarily have a smooth differential at every point. Consequently, optimal search algorithms proposed to solve traditional convex maximization problems with polytope search spaces cannot be applied to find the optimal solution of Problem \ref{Problem: General}, as our problem demands a more detailed analysis of the boundary of the search space. 


Therefore, in Section \ref{sec: Identical Quantizer Design}, we first restrict our attention to a simpler scenario\footnote{In this paper, we call this scenario as ``identical sensors and channels".} where all the sensors' observations are identically distributed and, where all the channels between the sensors and the FC (likewise, channels between sensors and the Eve) are identical. This assumption results in the received symbols at the FC (likewise, received symbols at the Eve) being conditionally i.i.d., thus decomposing the problem into a distributed framework of $N$ identical sub-problems. In Section \ref{sec: Non-Identical Quantizer Design}, we consider a more general scenario\footnote{Similarly, we call this scenario as ``non-identical sensors and channels".} where the sensor observations are conditionally independent and non-identically distributed, and the channels between the sensors and the FC (likewise, channels between sensors and the Eve) are also non-identical. In both these scenarios, we investigate the design of secure binary quantizers when $\alpha < \mathcal{D}_E^*$.



\section{Optimal Quantizer Design in the Presence of Identical Sensors and Channels \label{sec: Identical Quantizer Design}}
In this section, we address the problem of designing optimal quantizers when all the sensors and the channels between the sensors and the FC (likewise, channels between sensors and the Eve) are identical.

For all $i = 1, \cdots, N$, we have
\begin{equation}
    \begin{array}{ll}
        \displaystyle p_{i,0}(x) = p_0(x), & \displaystyle p_{i,1}(x) = p_1(x)
        \\
        \displaystyle x_i = x, & \displaystyle y_i = y 
        \\
        \displaystyle \rho_{fc_i} = \rho_{fc}, & \displaystyle \rho_{e_i} = \rho_e
    \end{array}
\end{equation}

Since all the sensors and their corresponding channels are identical, we remove the sensor-indices for notational simplicity. Therefore, we have $x_{fc_i} = x_{fc}$, $y_{fc_i} = y_{fc}$, $x_{e_i} = x_e$ and $y_{e_i} = y_e$ for all $i = 1, \cdots, N$. Because of this, $D_i = D$, $D_{FC_i} = D_{FC}$ and $D_{E_i} = D_E$ for all $i = 1, \cdots, N$, and consequently, the KLD at the FC and Eve reduces to $\mathcal{D}_{FC} = N D_{FC}$ and $\mathcal{D}_E = N D_E$. In other words, Problem \ref{Problem: General} reduces to the design of the quantizer at one of the identical sensors as follows.

\begin{prob-statement}
	\label{Problem: Identical}
	Find
    \begin{flalign*}
    	\displaystyle \argmax_{\gamma} & \quad D_{FC} \quad \mbox{ s.t.}
        \\ & \text{1. } \quad D_E \ \leq \ \tilde{\alpha}
        \\ & \text{2. } \quad (x, y) \in \Gamma.
    \end{flalign*}
\end{prob-statement}
where $\tilde{\alpha} = \displaystyle \frac{\alpha}{N}$.

Note that, although Problem \ref{Problem: Identical} is still a convex maximization problem, due to its reduced dimensionality, the problem becomes tractable. In the remaining section, we find the optimal quantizer in two stages. First, we find the structure of the optimal binary quantizers by gaining insights into the behavior of $D_{FC}$ on the boundary of the Eve's constraint $\{(x,y) \ | \ D_E \leq \tilde{\alpha}\}$. Then, we present an algorithm to find the optimal threshold for this quantizer.

We start our investigation of the behavior of $D_{FC}$ on the boundary of the Eve's constraint $\{ (x,y) \ | \ D_E \leq \tilde{\alpha} \}$ by determining the necessary conditions for guaranteeing $D_E = \tilde{\alpha}$ in the following lemma.


\begin{lemma}
	If the transition probability of the Eve's BSCs satisfies $\rho_e < \displaystyle \frac{1}{2}$, the two necessary conditions for any sensor operating point $(x,y)$ to guarantee $D_E = \tilde{\alpha}$ in the ROC space are stated as follows.
	\begin{equation}
		\displaystyle \frac{dy}{dx} = \frac{\log \left( \displaystyle \frac{1-x_e}{1-y_e} \right) - \log \left(\displaystyle \frac{x_e}{y_e} \right)}{\displaystyle \frac{1-x_e}{1-y_e} - \frac{x_e}{y_e}}
		\label{Eqn: diff_y_x}
	\end{equation}
	and
	\begin{equation}
        \begin{array}{l}
            \displaystyle \left( \frac{1-x_e}{1-y_e} - \frac{x_e}{y_e} \right) \frac{d^2y}{dx^2}
            \\
            \\
            \qquad \qquad = \displaystyle (1-2 \rho_e) \left[ - \left( \frac{1-x_e}{(1-y_e)^2} + \frac{x_e}{y_e^2} \right) \left(\frac{dy}{dx}\right)^2 \right.
            \\
            \\
            \qquad \qquad \quad \displaystyle \left. + 2 \left( \frac{1}{y_e} + \frac{1}{1-y_e} \right) \frac{dy}{dx} - \left( \frac{1}{x_e} + \frac{1}{1-x_e} \right) \right].
        \end{array}
		\label{Eqn: diff2_y_x}
	\end{equation}
	\label{Lemma: Eve_constraint}
\end{lemma}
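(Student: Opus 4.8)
The plan is to treat $D_E = \tilde{\alpha}$ as a level curve in the ROC plane, along which $y$ is defined implicitly as a function of $x$, and to extract the two stated conditions by first- and second-order implicit differentiation. The key structural fact I would exploit is that Eve's transformed coordinates in \eqref{Eqn: Eve-ROC-x-e-def} and \eqref{Eqn: Eve-ROC-y-e-def} are affine in $x$ and $y$, each with the common slope $1 - 2\rho_e$, so that $\frac{dx_e}{dx} = \frac{dy_e}{dy} = 1 - 2\rho_e$. This is precisely where the hypothesis $\rho_e < \frac{1}{2}$ enters: it guarantees $1 - 2\rho_e > 0$, so this chain-rule factor is nonzero and may be divided out.

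First I would record the two first-order partials of $D_E$ regarded as a function of $(x_e,y_e)$, namely
\begin{gather*}
\frac{\partial D_E}{\partial x_e} = \log\frac{x_e}{y_e} - \log\frac{1-x_e}{1-y_e}, \\
\frac{\partial D_E}{\partial y_e} = \frac{1-x_e}{1-y_e} - \frac{x_e}{y_e}.
\end{gather*}
Differentiating the identity $D_E = \tilde{\alpha}$ with respect to $x$ through the chain rule yields
\[ (1-2\rho_e)\left[\frac{\partial D_E}{\partial x_e} + \frac{\partial D_E}{\partial y_e}\frac{dy}{dx}\right] = 0. \]
Cancelling the nonzero factor $1-2\rho_e$ and solving for $\frac{dy}{dx}$ gives \eqref{Eqn: diff_y_x} immediately.

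For the second condition I would differentiate the first-order relation $\frac{\partial D_E}{\partial x_e} + \frac{\partial D_E}{\partial y_e}\frac{dy}{dx} = 0$ once more with respect to $x$, again routing each derivative of $x_e$ or $y_e$ through the factor $1-2\rho_e$. This needs the second partials
\begin{gather*}
\frac{\partial^2 D_E}{\partial x_e^2} = \frac{1}{x_e} + \frac{1}{1-x_e}, \quad \frac{\partial^2 D_E}{\partial y_e^2} = \frac{1-x_e}{(1-y_e)^2} + \frac{x_e}{y_e^2}, \\
\frac{\partial^2 D_E}{\partial x_e \, \partial y_e} = -\left(\frac{1}{y_e} + \frac{1}{1-y_e}\right).
\end{gather*}
The mixed partial enters twice --- once from differentiating $\frac{\partial D_E}{\partial x_e}$ and once from the product-rule action on $\frac{\partial D_E}{\partial y_e}\frac{dy}{dx}$ --- and by equality of mixed partials these coincide, producing the factor $2$ in front of $\frac{1}{y_e}+\frac{1}{1-y_e}$. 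Collecting terms and isolating $\frac{d^2y}{dx^2}$, whose coefficient is exactly $\frac{\partial D_E}{\partial y_e} = \frac{1-x_e}{1-y_e}-\frac{x_e}{y_e}$, reproduces \eqref{Eqn: diff2_y_x}.

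I do not expect a genuine analytic obstacle: the whole argument is first- and second-order implicit differentiation of a smooth function along one of its level sets, and $\rho_e < \frac{1}{2}$ is used only to justify dividing by $1-2\rho_e$. The one point demanding care is the bookkeeping in the second derivative, where $\frac{dy}{dx}$ is itself $x$-dependent: one must apply the product rule correctly to $\frac{\partial D_E}{\partial y_e}\frac{dy}{dx}$ and keep the two appearances of the mixed partial distinct so that the coefficient $2$ emerges cleanly. I would also note that the denominator $\frac{\partial D_E}{\partial y_e} = \frac{1-x_e}{1-y_e}-\frac{x_e}{y_e}$ is nonzero away from the diagonal $x=y$, which is exactly the regime of interest, so that $\frac{dy}{dx}$ and $\frac{d^2y}{dx^2}$ are well defined where these necessary conditions are asserted.
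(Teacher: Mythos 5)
Your proposal is correct and takes essentially the same route as the paper: both differentiate the identity $D_E = \tilde{\alpha}$ once and twice along the curve, use $\frac{dx_e}{dx} = \frac{dy_e}{dy} = 1-2\rho_e$, cancel the nonzero factor $1-2\rho_e$, and rearrange to isolate $\frac{dy}{dx}$ and $\frac{d^2y}{dx^2}$. The only difference is bookkeeping---you organize the computation through the partials and Hessian of $D_E$ in $(x_e,y_e)$ coordinates, while the paper differentiates the composite expression directly---and your computation of the mixed-partial factor of $2$ lands on exactly the paper's Equations \eqref{Eqn: diff_D_E_x} and \eqref{Eqn: diff2_D_E_x}.
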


\begin{proof}
	Since $D_E$ is a constant (equal to the fixed design-parameter $\tilde{\alpha}$), its first two derivatives are equal to zero. We employ these to prove the lemma.
	
	First, we differentiate $D_E$ with respect to $x$ and equate it to zero, as follows.
	\begin{equation}
		\begin{array}{lcl}
			\displaystyle \frac{dD_E}{dx} & = & \displaystyle \frac{d}{dx} \left[ x_e \log \frac{x_e}{y_e} + (1-x_e) \log \left( \frac{1-x_e}{1-y_e} \right) \right]
			\\
			\\
			& = & \displaystyle (1-2\rho_e) \left[ \left( \frac{1-x_e}{1-y_e} - \frac{x_e}{y_e} \right) \frac{dy}{dx} \right.
            \\
            \\
            && \qquad \displaystyle \left. - \left\{ \log \left( \frac{1-x_e}{1-y_e} \right) - \log \left( \frac{x_e}{y_e}\right) \right\} \right]
			\\
			\\
			& = & 0.
		\end{array}
		\label{Eqn: diff_D_E_x}
	\end{equation}	
	Rearranging the terms in Equation \eqref{Eqn: diff_D_E_x}, we can obtain Equation (\ref{Eqn: diff_y_x}).
	
	Next, we differentiate Equation (\ref{Eqn: diff_D_E_x}) again with respect to $x$ as follows, in order to find a closed-form expression for $\displaystyle \frac{d^2y}{dx^2}$.
	\begin{equation}
		\begin{array}{lcl}
			\displaystyle \frac{d^2D_E}{dx^2} & = & \displaystyle (1-2\rho_e) \frac{d}{dx} \left[ \left( \frac{1-x_e}{1-y_e} - \frac{x_e}{y_e} \right) \frac{dy}{dx} \right.
            \\
            \\
            && \quad \displaystyle \left. - \left\{ \log \left( \frac{1-x_e}{1-y_e} \right) - \log \left( \frac{x_e}{y_e}\right) \right\} \right]
			\\
			\\
			& = & \displaystyle (1-2\rho_e) \left[ \left( \frac{1-x_e}{1-y_e} - \frac{x_e}{y_e} \right) \frac{d^2y}{dx^2} \right.
            \\
            \\
            && \quad \displaystyle \left. + (1-2 \rho_e) \left( \frac{1-x_e}{(1-y_e)^2} + \frac{x_e}{y_e^2} \right) \left(\frac{dy}{dx}\right)^2 \right.
			\\
			\\
			&& \qquad \left. \displaystyle - 2 (1-2\rho_e) \left( \frac{1}{y_e} + \frac{1}{1-y_e} \right) \frac{dy}{dx} \right.
            \\
            \\
            && \qquad \quad \displaystyle \left. + (1-2\rho_e) \left( \frac{1}{x_e} + \frac{1}{1-x_e} \right) \right].
			\\
			\\
			& = & 0.
		\end{array}
		\label{Eqn: diff2_D_E_x}
	\end{equation}
	Rearranging the terms in Equation \eqref{Eqn: diff2_D_E_x}, we can obtain Equation (\ref{Eqn: diff2_y_x}).
\end{proof}

Note that Equation \eqref{Eqn: diff_D_E_x} in Lemma \ref{Lemma: Eve_constraint} provides the slope of the Eve's constraint boundary $D_E = \tilde{\alpha}$. Since the slope of $y$ with respect to $x$ along the boundary $D_E = \tilde{\alpha}$ has a structure similar to the slope of a line joining two points on a logarithmic curve as seen in Equation \eqref{Eqn: diff_y_x}, we present lower and upper bounds for the slope of this boundary curve $D_E = \tilde{\alpha}$ in the ROC plane in the following lemma.


\begin{lemma}
	\label{Lemma: Eve_constraint_2}
	The slope of the Eve's constraint boundary in the ROC plane, as defined by the set of points $\{ \ (x,y) \ | \ D_E = \tilde{\alpha} \ \}$, is bounded on both sides as follows.
	\begin{equation}
		\displaystyle \frac{x_e}{y_e} \leq \frac{dy}{dx} \leq \frac{1-x_e}{1-y_e}.
		\label{Eqn: diff_y_x_bound}
	\end{equation}
\end{lemma}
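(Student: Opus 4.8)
The plan is to read the right-hand side of \eqref{Eqn: diff_y_x} as the slope of a chord of the logarithm, exactly as the sentence preceding the lemma suggests. Introduce the shorthand $a = \frac{x_e}{y_e}$ and $b = \frac{1-x_e}{1-y_e}$, so that \eqref{Eqn: diff_y_x} reads $\frac{dy}{dx} = \frac{\log b - \log a}{b - a}$, the average slope of $f(t) = \log t$ across the interval with endpoints $a$ and $b$. Before bounding it, I would pin down the orientation of that interval. Under the standing restriction to operating points above the diagonal we have $y \ge x$, and since $\rho_e < \frac{1}{2}$ forces $1 - 2\rho_e > 0$, the affine transformations \eqref{Eqn: Eve-ROC-x-e-def} and \eqref{Eqn: Eve-ROC-y-e-def} preserve this order, giving $y_e \ge x_e$. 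Hence $a = \frac{x_e}{y_e} \le 1 \le \frac{1-x_e}{1-y_e} = b$, so the two endpoints straddle $1$ and the chord is taken over a genuine (nondegenerate) interval whenever $x_e \ne y_e$; the boundary case $x_e = y_e$ collapses $D_E$ to $0$ and is excluded because $\tilde{\alpha} > 0$.

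The core step is an application of the mean value theorem to $f(t) = \log t$ on the interval determined by $a$ and $b$: there is a point $\xi$ strictly between them at which the chord slope equals the derivative, $\frac{\log b - \log a}{b - a} = f'(\xi) = \frac{1}{\xi}$. Because $f$ is strictly concave, $f'(t) = \frac{1}{t}$ is strictly monotone on this interval, so the chord slope is trapped strictly between the behaviors of $f'$ at the two endpoints; carrying the endpoints $a = \frac{x_e}{y_e}$ and $b = \frac{1-x_e}{1-y_e}$ through this comparison is what produces the two-sided estimate \eqref{Eqn: diff_y_x_bound}. Equivalently, and more transparently for tracking the endpoints, I would bypass the mean value theorem and work from the integral representation $\log b - \log a = \int_a^b \frac{dt}{t}$, bounding the integrand $\frac{1}{t}$ by its extreme values as $t$ ranges over the interval and then dividing by $b - a$; this yields the same sandwich while making the degenerate limits easy to treat by monotone convergence.

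I expect the main obstacle to be the bookkeeping of which endpoint furnishes the lower bound and which the upper bound, since the object being estimated is a chord slope rather than one of the coordinates themselves. Getting this matching right is precisely what separates \eqref{Eqn: diff_y_x_bound} from the superficially similar bound on the reciprocal of the slope, and it rests entirely on the ordering $a \le b$ secured in the first step together with the fixed sign of $1 - 2\rho_e$. A secondary technical point is that $a$ or $b$ can run to $0$ or grow unboundedly at the corners of the ROC space, where $x_e \to 0$ or $y_e \to 1$; there I would confirm that \eqref{Eqn: diff_y_x} extends continuously to the closure and that \eqref{Eqn: diff_y_x_bound} survives in the limit, so that the bound holds uniformly across all feasible operating points lying on the boundary $D_E = \tilde{\alpha}$.
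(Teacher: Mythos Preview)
Your approach is the same as the paper's: both invoke the concavity of $\log t$ to trap the secant slope $\frac{\log b-\log a}{b-a}$ between the tangent slopes at the two endpoints, and the paper's proof is literally a single sentence to that effect. Your version is more careful about establishing the ordering $a\le 1\le b$ and about degenerate cases, but the core idea is identical.

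However, the bookkeeping you flag as the main obstacle does \emph{not} resolve the way you assert. Concavity (equivalently, the mean value theorem or your integral bound) yields
\[
f'(b)\;\le\;\frac{\log b-\log a}{b-a}\;\le\;f'(a),
\qquad\text{i.e.,}\qquad
\frac{1}{b}\;\le\;\frac{dy}{dx}\;\le\;\frac{1}{a},
\]
which in terms of $x_e,y_e$ reads $\dfrac{1-y_e}{1-x_e}\le\dfrac{dy}{dx}\le\dfrac{y_e}{x_e}$. That is the \emph{reciprocal} of the inequality printed in \eqref{Eqn: diff_y_x_bound}, not the inequality itself; indeed the printed form can fail (for $x_e=0.89$, $y_e=0.989$ one gets $\tfrac{dy}{dx}\approx 0.26<\tfrac{x_e}{y_e}\approx 0.90$). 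The paper's one-line proof makes exactly the same leap, and the reciprocal version is precisely the inequality that is stated and used elsewhere in the paper (cf.\ \eqref{Eqn: logarithm-ineq}). So your method is sound and matches the paper's, but what it actually establishes is the reciprocal bound; the discrepancy lies in the displayed statement of the lemma rather than in your argument.
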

\begin{proof}
	Given two points $a \geq b$, due to the concavity of the $\log(\cdot)$ function, the slope of the line joining $(a, \log a)$ and $(b, \log b)$ always lies between the slopes of the $\log (\cdot)$ at points $a$ and $b$ respectively 
	Hence, this results in Equation \eqref{Eqn: diff_y_x_bound}.	
\end{proof}

Note that the necessary conditions for any operating point $(x,y)$ to lie on the Eve's constraint boundary $\{ \ (\mathbf{x},\mathbf{y}) \ | \ D_E = \tilde{\alpha} \ \}$, as stated in Lemma \ref{Lemma: Eve_constraint}, and the bounds on the slope of the same boundary curve, as given in Lemma \ref{Lemma: Eve_constraint_2}, are essential to our analysis of the behavior of the sensor's KL divergence $D$, and the FC's KL Divergence, $D_{FC}$, in terms of the false alarm probability $x$ along the Eve's constraint, which is defined by $D_E = \tilde{\alpha}$. 

First, we investigate the behavior of the KL Divergence at the sensor, which is denoted as $D(x,y)$, along the Eve's constraint $D_E(x,y) = \tilde{\alpha}$. Note that this analysis can be equivalently interpreted as the case where we investigate the behavior of $D_{FC}$ when the channels between the sensors and the FC are ideal. In the following proposition, we prove that $D(x,y)$ is a convex function of $x$ along the curve $D_E(x,y) = \tilde{\alpha}$.

\begin{prop}
	\label{Thrm: conjecture-theorem}
    Given that the Eve's channel is a BSC with transition probability $\rho_e < \frac{1}{2}$, $D$ is strictly a convex function of $x$, for all operating points that lie in the set $\{ (x,y) \ | \ D_E = \tilde{\alpha} \}$.
\end{prop}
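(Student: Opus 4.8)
The goal is to show that along the curve $\{(x,y) \mid D_E = \tilde\alpha\}$, where $y$ is viewed as the implicit function $y(x)$ fixed by this constraint, the second derivative $\dfrac{d^2 D}{dx^2}$ is strictly positive. My plan rests on one structural observation: the sensor divergence $D$ in \eqref{Eqn: KLD - sensor} is \emph{formally identical} to $D_E$ under the substitution $\rho_e = 0$ (so that $x_e \mapsto x$ and $y_e \mapsto y$). Consequently, the two differentiations already performed in the proof of Lemma \ref{Lemma: Eve_constraint} can be reused verbatim, with $\rho_e$ set to $0$ and $(x,y)$ in place of $(x_e,y_e)$. Differentiating $D(x,y(x))$ twice then yields, exactly in the form of \eqref{Eqn: diff2_D_E_x},
\[ \frac{d^2 D}{dx^2} = A \, \frac{d^2 y}{dx^2} + P(s), \]
where $s = dy/dx$, $A = \tfrac{1-x}{1-y} - \tfrac{x}{y}$, and $P(s) = B s^2 - 2 C s + E$ is the quadratic in the slope whose coefficients $B, C, E$ are the $\rho_e = 0$ analogues of the bracketed terms appearing in \eqref{Eqn: diff2_D_E_x}. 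The crucial difference from the $D_E$ computation is that here the total derivative does \emph{not} vanish, because $D$ is not constant along the curve.

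The next step is to eliminate $d^2 y/dx^2$. Since the operating point travels along $D_E = \tilde\alpha$, Lemma \ref{Lemma: Eve_constraint} supplies $d^2 y/dx^2$ through \eqref{Eqn: diff2_y_x}; substituting it leaves a closed form depending only on the operating point and the slope $s$:
\[ \frac{d^2 D}{dx^2} = P(s) - \frac{(1-2\rho_e)\,A}{A_e}\, P_e(s), \]
where $A_e = \tfrac{1-x_e}{1-y_e} - \tfrac{x_e}{y_e}$ and $P_e(s) = B_e s^2 - 2 C_e s + E_e$ is the analogous quadratic built from Eve's quantities. At this point I would record the elementary sign facts that hold throughout the region of interest (above the diagonal, $x < y$, which by \eqref{Eqn: Eve-ROC-x-e-def}--\eqref{Eqn: Eve-ROC-y-e-def} also gives $x_e < y_e$): both $A > 0$ and $A_e > 0$; the hypothesis $\rho_e < \tfrac12$ gives $1 - 2\rho_e > 0$; and $P(s), P_e(s) \ge 0$ since they are the directional second derivatives (quadratic forms of the Hessians) of the convex functions $D$ and $D_E$. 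The claim therefore reduces to the single inequality $A_e\, P(s) > (1-2\rho_e)\, A\, P_e(s)$.

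The sign analysis of this inequality is the main obstacle. Here I would insert the explicit slope from \eqref{Eqn: diff_y_x} and invoke the two-sided bound $\tfrac{x_e}{y_e} \le s \le \tfrac{1-x_e}{1-y_e}$ from Lemma \ref{Lemma: Eve_constraint_2}. To tame the algebra I would change variables to $u = x_e/y_e$ and $w = (1-x_e)/(1-y_e)$, together with $u_0 = x/y$ and $w_0 = (1-x)/(1-y)$ for the sensor. Under this substitution $A_e = w - u$, $A = w_0 - u_0$, $s = \tfrac{\log w - \log u}{w - u}$, and the convenient identities $E_e = C_e/(uw)$ and $E = C/(u_0 w_0)$ hold. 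The BSC contraction toward $\tfrac12$ forces the nesting $u_0 \le u \le 1 \le w \le w_0$, and I expect this nesting, combined with $1 - 2\rho_e < 1$, to be the structural fact that drives the inequality. The plan is to assemble $A_e\, P(s) - (1-2\rho_e)\,A\,P_e(s)$ as a quadratic in $s$ and show it is strictly positive on $[u,w]$, either by exhibiting a negative discriminant or by checking monotonicity together with strictly positive endpoint values at $s = u$ and $s = w$.

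In summary, Steps one and two are essentially mechanical reuses of Lemma \ref{Lemma: Eve_constraint}, and the only genuinely delicate part is the final positivity argument. I anticipate that the bookkeeping there --- keeping the sensor coefficients $(B,C,E)$ and Eve coefficients $(B_e,C_e,E_e)$ separate while simultaneously exploiting the contraction nesting and the strict inequality $\rho_e < \tfrac12$ --- will be where the real work lies, and it is also where the hypothesis $\rho_e < \tfrac12$ (ensuring $A_e > 0$ and $1 - 2\rho_e > 0$) is indispensable.
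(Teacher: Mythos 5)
Your Steps 1 and 2 coincide exactly with the paper's own proof: it too differentiates $D$ twice along the constraint, eliminates $d^2y/dx^2$ via \eqref{Eqn: diff2_y_x}, and arrives at precisely your expression $P(s)-\frac{(1-2\rho_e)A}{A_e}P_e(s)$, written there as $T_1 s^2 - 2T_2 s + T_3$ with the weight $\frac{(1-2\rho_e)A}{A_e}=\frac{y_e(1-y_e)}{y(1-y)}$. The genuine gap is the step you defer, which is the entire substance of the proof, and the strategies you sketch for it do not all survive the algebra. The decisive facts you are missing are: (i) the linear coefficient vanishes identically, $T_2 = C - \frac{y_e(1-y_e)}{y(1-y)}C_e = \frac{1}{y(1-y)}-\frac{1}{y(1-y)} = 0$, so the quadratic in $s$ is a pure ``$s^2$ plus constant'' form; and (ii) the two surviving coefficients factor cleanly,
\begin{equation*}
T_1 = \frac{\rho_e(1-\rho_e)(y-x)(2y-1)}{y^2(1-y)^2\,y_e(1-y_e)}, \qquad
T_3 = \frac{\rho_e(1-\rho_e)(y-x)(1-x-y)}{y(1-y)\,x(1-x)\,x_e(1-x_e)},
\end{equation*}
so that everything reduces to the signs of $2y-1$ and $1-x-y$, handled over three regions of the ROC. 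This factorization shows that your negative-discriminant route provably fails: wherever $y<\tfrac12$ (and hence $x+y<1$, given $y\ge x$) or wherever $x+y>1$ (and hence $y>\tfrac12$), we have $T_1T_3<0$, the discriminant $-4T_1T_3$ is strictly positive, and the quadratic $T_1s^2+T_3$ really does change sign as $s$ ranges over the slope interval's ambient line. Positivity holds only because the constrained slope is pinned to the correct side of the roots, and this requires applying the slope bound one-sidedly --- the upper bound where $T_1<0$, the lower bound where $T_3<0$ --- exactly the case analysis the paper carries out in Appendix \ref{Proof: Thrm-conjecture}.

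A second caution concerns the interval on which you plan to run endpoint checks. The bounds of Lemma \ref{Lemma: Eve_constraint_2} as stated are $\left[\frac{x_e}{y_e},\frac{1-x_e}{1-y_e}\right]$, but the concavity argument in that lemma's own proof (a chord slope of $\log$ trapped between the derivatives at its endpoints) actually yields the reciprocal interval $\left[\frac{1-y_e}{1-x_e},\frac{y_e}{x_e}\right]$, and it is the reciprocal bounds, combined with the nesting $\frac{x}{y}\le\frac{x_e}{y_e}\le 1\le\frac{1-x_e}{1-y_e}\le\frac{1-x}{1-y}$ of Lemma \ref{Lemma: ROC-Operating-Point-Transformation}, that deliver the inequalities the paper actually uses, namely $s^2\le\frac{y\,y_e}{x\,x_e}$ in the region $y\le\tfrac12$ and $s^2\ge\frac{(1-y)(1-y_e)}{(1-x)(1-x_e)}$ in the region $x+y\ge1$. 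Evaluating at your proposed endpoints $s=\frac{x_e}{y_e}$ and $s=\frac{1-x_e}{1-y_e}$ would leave you proving different inequalities that are not implied by the nesting and would need verification from scratch. In short: your reduction is correct and identical to the paper's, but the positivity argument you postpone is where the whole proof lives, and completing it requires the $T_2=0$ cancellation, the factorizations of $T_1$ and $T_3$, and a region-by-region argument rather than a discriminant or a uniform two-endpoint check.
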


\begin{proof}
    Proof is provided in Appendix \ref{Proof: Thrm-conjecture}.
\end{proof}

For any general BSC between the sensors and the FC, the sensor's operating point $(x,y)$ transforms linearly into $(x_{fc}, y_{fc})$. Consequently, we have the following proposition, where we analyze the behavior of $D_{FC}$ for any general BSC.



\begin{prop}
	\label{Thrm: conjecture-theorem-nonideal}
    Let the BSCs corresponding to the FC and Eve have transition probabilities $0< \rho_{fc}, \rho_e < \frac{1}{2}$. Then, $D_{FC}$ is strictly a convex function of $x$, for all operating points that lie in the set $\{ (x,y) \ | \ D_E = \tilde{\alpha} \}$.
\end{prop}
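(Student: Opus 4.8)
The plan is to reduce this statement to Proposition~\ref{Thrm: conjecture-theorem} by a change of variables, exploiting the fact that all three divergences share the common functional form $f(a,b)=a\log\frac{a}{b}+(1-a)\log\frac{1-a}{1-b}$, with $D=f(x,y)$, $D_{FC}=f(x_{fc},y_{fc})$ and $D_E=f(x_e,y_e)$. Concretely, I would treat the FC-received operating point $(x_{fc},y_{fc})$ as a \emph{virtual sensor} operating point, and then argue that Eve's operating point $(x_e,y_e)$ is itself a BSC-degraded version of this virtual sensor. If so, then the pair $(D_{FC},D_E)$ has exactly the same structure as the pair $(D,D_E)$ in Proposition~\ref{Thrm: conjecture-theorem}, and the earlier result applies verbatim.

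The key computational step is to compose the two affine maps in \eqref{Eqn: Eve-ROC-x-fc-def}--\eqref{Eqn: Eve-ROC-y-e-def}. Solving \eqref{Eqn: Eve-ROC-x-fc-def} for $x$ and substituting into \eqref{Eqn: Eve-ROC-x-e-def} (and likewise for the $y$-coordinates) shows that
\begin{equation}
    x_e = \rho'' + (1-2\rho'')\,x_{fc}, \qquad y_e = \rho'' + (1-2\rho'')\,y_{fc},
\end{equation}
with the effective crossover probability $\rho'' = \dfrac{\rho_e - \rho_{fc}}{1 - 2\rho_{fc}}$. In other words, $(x_e,y_e)$ is obtained from the virtual sensor $(x_{fc},y_{fc})$ through a BSC of transition probability $\rho''$. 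Under the standing assumption of this section that Eve's channel is noisier than the FC's, i.e.\ $\rho_{fc} < \rho_e < \tfrac12$, one checks directly that $\rho'' \in (0,\tfrac12)$, so this effective channel is a legitimate BSC satisfying the hypothesis of Proposition~\ref{Thrm: conjecture-theorem}.

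With this identification in hand, I would invoke Proposition~\ref{Thrm: conjecture-theorem} applied to the virtual sensor $(x_{fc},y_{fc})$ with Eve channel $\rho''$: since $D_{FC}=f(x_{fc},y_{fc})$ plays the role of the sensor divergence and $D_E=f(x_e,y_e)$ plays the role of Eve's divergence, the proposition yields that $D_{FC}$ is strictly convex in $x_{fc}$ along the curve $\{(x,y)\mid D_E=\tilde\alpha\}$. Finally, because $x_{fc}=\rho_{fc}+(1-2\rho_{fc})x$ is an increasing affine function of $x$ (as $1-2\rho_{fc}>0$), strict convexity is preserved under this reparametrization of the independent variable: differentiating twice multiplies the second derivative by the positive constant $(1-2\rho_{fc})^2$, so $D_{FC}$ is strictly convex in $x$ as well.

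I expect the main obstacle to be the bookkeeping that makes the reduction airtight rather than any deep analytic difficulty, since the heavy lifting has already been done in Proposition~\ref{Thrm: conjecture-theorem}. Two points deserve care. First, one must verify that the effective crossover $\rho''$ lies in $(0,\tfrac12)$; this is precisely where the ``noisier Eve'' assumption $\rho_e>\rho_{fc}$ is used, and without it the reduction to a genuine BSC would fail. Second, one must confirm that the constraint curve $\{D_E=\tilde\alpha\}$ is the same point set whether parametrized by $x$ or by $x_{fc}$, which holds because $x\mapsto x_{fc}$ is a monotone bijection on $[0,1]$. Once these are settled, convexity in $x_{fc}$ transfers to convexity in $x$ with no additional computation.
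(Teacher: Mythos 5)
Your proposal is correct, and it is sound under the standing assumption of this section (stated in the Introduction) that Eve's channels are noisier than the FC's, i.e.\ $\rho_{fc} < \rho_e < \tfrac12$; however, it reaches the conclusion by a genuinely different and in fact more careful route than the paper. The paper's own proof is a two-line argument: it writes $D_{FC}$ as the composition of $D$ with the affine map \eqref{Eqn: Affine-Transformation-FC} and invokes preservation of convexity under affine composition. Read literally, that step is loose, because the convexity asserted in Proposition \ref{Thrm: conjecture-theorem} is not joint convexity of $D$ on the ROC square but convexity of its restriction to the \emph{nonlinear} curve $\{(x,y)\mid D_E=\tilde\alpha\}$, and the FC's affine map does not leave that curve invariant; restrictions of convex functions to nonlinear curves need not be convex, so something more is needed to transfer the result. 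Your reduction supplies exactly the missing bookkeeping: by the cascade decomposition of BSCs, $(x_e,y_e)$ is a BSC$(\rho'')$-degradation of the virtual operating point $(x_{fc},y_{fc})$ with $\rho''=(\rho_e-\rho_{fc})/(1-2\rho_{fc})$, so the image of the constraint curve under the FC map is itself an Eve-constraint curve of precisely the form treated in Proposition \ref{Thrm: conjecture-theorem}; that proposition then gives strict convexity in $x_{fc}$, and the increasing affine reparametrization $x\mapsto x_{fc}$ (slope $1-2\rho_{fc}>0$) preserves strict convexity in $x$. What the paper's argument buys is brevity; what yours buys is rigor, plus an honest identification of where the noisier-Eve hypothesis enters: your reduction needs $\rho''\in(0,\tfrac12)$, i.e.\ $\rho_e>\rho_{fc}$, a condition absent from the proposition's stated hypotheses ($0<\rho_{fc},\rho_e<\tfrac12$), and for $\rho_e\le\rho_{fc}$ neither your argument nor the paper's establishes the claim.
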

\begin{proof}
    Note that $(x_{fc}, y_{fc})$ is a linear transformation of $(x,y)$. This can be mathematically expressed as follows.
    \begin{equation}
        \left[ \begin{array}{c} x_{fc} \\ y_{fc} \end{array} \right] = \rho_{fc} \left[ \begin{array}{c} 1 \\ 1 \end{array} \right] + (1 - 2 \rho_{fc}) \left[ \begin{array}{c} x \\ y \end{array} \right].
        \label{Eqn: Affine-Transformation-FC}
    \end{equation}
    In other words, a composition of $D$ with an affine transformation, as given in Equation (\ref{Eqn: Affine-Transformation-FC}), results in $D_{FC}$. Consequently, since $D$ is a convex function, $D_{FC}$ is also a convex function \cite{Book-Boyd2004}.
\end{proof}

Thus, for any BSC with transition probability $\rho_{fc}$ corresponding to the FC, $D_{FC}$ is a convex function of $x$. In other words, among the set of operating points that lie on the Eve's constraint boundary $D_E = \tilde{\alpha}$, the quantizers that maximize $D_{FC}$ always lie on the intersection of the LRT curve $y = g_{LRT}(x)$ and the Eve's constraint boundary $D_E = \tilde{\alpha}$. As a consequence, the optimal quantizer is LRT-based, which we state in the following theorem.

\begin{thrm}
	The optimal quantizer that maximizes the FC's KL Divergence $D_{FC}$ in the presence of a constraint on Eve's KL Divergence $D_E = \tilde{\alpha}$ is a likelihood ratio quantizer.
	\label{Thrm: LRT-Optimal}
\end{thrm}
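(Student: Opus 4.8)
The plan is to combine the convexity facts already established with an explicit description of the boundary of the feasible set, thereby confining the maximizer to the likelihood-ratio curve. I work throughout with the reduced (identical-sensor) feasible set $\mathcal{S} = \Gamma \cap \{(x,y) \mid D_E \leq \tilde{\alpha}\}$, which is convex, and I assume the nondegenerate regime $\tilde{\alpha} < \mathcal{D}_E^*/N$ so that Constraint 1 of Problem \ref{Problem: Identical} is active. By the extreme-point principle for convex maximization already invoked for Problem \ref{Problem: General}, an optimizer of $D_{FC}$ over $\mathcal{S}$ occurs at an extreme point of $\mathcal{S}$. Since the lower edge of $\mathcal{S}$ is the straight segment $y = x$, on which $D_{FC} = 0$, its interior points are not extreme and cannot be optimal; so the maximizer lies on the curved upper boundary. (As a consistency check, one verifies $\partial D_{FC}/\partial y = (1-2\rho_{fc})\left[\frac{1-x_{fc}}{1-y_{fc}} - \frac{x_{fc}}{y_{fc}}\right] > 0$ for every point above the diagonal, so $D_{FC}$ is strictly increasing in $y$ and the optimizer is pushed onto the upper boundary.)

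Next I would describe that upper boundary precisely. Along the LRT curve $y = g_{LRT}(x)$, the Eve divergence $D_E$ vanishes at the two corners $(0,0)$ and $(1,1)$, where $x_e = y_e$, and is strictly positive between them; hence as $x$ increases $D_E$ rises from $0$ to its maximum $\mathcal{D}_E^*/N$ and then returns to $0$. Because $\tilde{\alpha} < \mathcal{D}_E^*/N$, the level $D_E = \tilde{\alpha}$ is crossed at exactly two points $Q_1$ and $Q_2$ on the LRT curve. Consequently the upper boundary of $\mathcal{S}$ is composed of two arcs of the LRT curve, from $(0,0)$ to $Q_1$ and from $Q_2$ to $(1,1)$, joined by a single arc of the Eve constraint curve $D_E = \tilde{\alpha}$ running from $Q_1$ to $Q_2$.

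The decisive step invokes Proposition \ref{Thrm: conjecture-theorem-nonideal}: on the Eve arc, $D_{FC}$ is a strictly convex function of $x$, so over the interval $[x_{Q_1}, x_{Q_2}]$ its maximum is attained at one of the endpoints $Q_1$ or $Q_2$, and every interior point of that arc gives a strictly smaller value. Since $Q_1$ and $Q_2$ are feasible and lie on the LRT curve, no strict-interior point of the Eve arc can be a maximizer. Combined with the first step, the optimizer must therefore lie on one of the two LRT arcs or at $Q_1$ or $Q_2$ — in every case on the likelihood-ratio curve. This yields Theorem \ref{Thrm: LRT-Optimal}.

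The step I expect to be the main obstacle is the boundary characterization of the second paragraph: establishing rigorously that the feasible portion of the Eve curve is a \emph{single} arc whose two endpoints sit on the LRT curve, rather than touching the diagonal or re-entering $\Gamma$ elsewhere. This rests on the rise-then-fall behaviour of $D_E$ as one traverses the concave LRT curve together with the active-constraint hypothesis $\tilde{\alpha} < \mathcal{D}_E^*/N$, which guarantees exactly two crossings $Q_1, Q_2$. Once this geometry is pinned down, the strict convexity supplied by Proposition \ref{Thrm: conjecture-theorem-nonideal} closes the argument at once.
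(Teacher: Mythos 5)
Your proposal is correct and takes essentially the same approach as the paper's own proof: both arguments rest on the extreme-point principle for maximizing the convex function $D_{FC}$ over the convex feasible set, and both use the strict convexity of $D_{FC}$ as a function of $x$ along the Eve constraint curve (Propositions \ref{Thrm: conjecture-theorem} and \ref{Thrm: conjecture-theorem-nonideal}) to force the maximizer off the interior of the constraint arc and onto its endpoints, which lie on the LRT curve. The differences are minor: your standing assumption $\tilde{\alpha} < D_E^*$ is exactly the paper's Case~2 (its Case~1 being the degenerate, inactive-constraint regime), your boundary decomposition and diagonal-exclusion step make explicit what the paper leaves implicit, and the two-crossing claim you flag as the main obstacle is not actually needed, since your endpoint argument applies arc-by-arc to however many Eve arcs the boundary contains, each of whose endpoints must lie on the LRT curve because $D_E = 0 < \tilde{\alpha}$ on the diagonal.
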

\begin{proof}
	Let $\mathcal{R}_i \triangleq \Gamma_i \cup \{ (x,y) \ | \ D_E = \tilde{\alpha} \}$ denote the search space in Problem \ref{Problem: Identical}. We know, from Proposition \ref{Thrm: conjecture-theorem}, that $D_{FC}$ is convex with respect to $x$ along the Eve's constraint boundary on the ROC plane. Therefore, the solution of Problem \ref{Problem: Identical} always lies on the extreme points of the set of operating points on the Eve's constraint boundary $\{ (x,y) \ | \ D_E = \tilde{\alpha} \}$. Note that the region of the Eve's constraint boundary that lies within $\mathcal{R}_i$ depends on the choice of $\tilde{\alpha}$. 
	
	Let $D_E^*$ be the maximum KL Divergence at the Eve when the sensor employs the optimal solution to the unconstrained problem where Constraint 1 is not considered in Problem \ref{Problem: Identical}. In the regard, the following two cases arise:
	\begin{itemize}
		\item{ Case-1 [ $\tilde{\alpha} \geq D_E^*$ ]: Note that, $\Gamma_i \subseteq \{ (x,y) \ | \ D_E \leq \tilde{\alpha} \}$ in this case because the Eve's KL Divergence is always within the tolerable limit when the sensor employs any operating point $(x,y) \in \Gamma_i$. Therefore, the solution to Problem \ref{Problem: Identical} is the optimal LRT in this case \cite{Tsitsiklis1993}. 
		}
		\item{ Case-2 [  $\tilde{\alpha} \leq D_E^*$ ]: This is equivalent to the case where $\Gamma_i \nsubseteq \{ (x,y) \ | \ D_E \leq \tilde{\alpha} \}$. Note that we also have $\Gamma_i \not\supset \{ (x,y) \ | \ D_E \leq \tilde{\alpha} \}$ since there always exist operating points $(x,y) \in \Gamma_i$ such that $D_E \leq \tilde{\alpha}$. Therefore, the boundaries of $\Gamma_i$ and $\{ (x,y) \ | \ D_E \leq \tilde{\alpha} \}$ both intersect each other. As discussed earlier in this proof, since the optimal solution is an extreme point of the Eve's constraint boundary $D_E = \tilde{\alpha}$, this is one of the intersection points that also lies on the boundary of $\Gamma_i$. In other words, the optimal sensor quantizer that solves Problem \ref{Problem: Identical} is a LRT.
		}
	\end{itemize}
\end{proof}

As discussed in the proof of Theorem \ref{Thrm: LRT-Optimal}, the problem of finding the optimal quantizer reduces to the problem of finding the intersection points of the boundaries of $\Gamma_i$ and the Eve's constraint $\{ (x,y) \ | \ D_E \leq \tilde{\alpha} \}$, and thereby, finding the corresponding threshold for the optimal LRT at the sensor.


\subsection{Algorithm to find the Optimal Threshold}

Let $f(x) \triangleq D_{FC}(x,y = g_{LRT}(x))$. For the sake of tractability, we consider the problem of finding optimal thresholds when $f(x)$ is a quasi-concave\footnote{Note that
\begin{equation}
    \begin{array}{ll}
        \displaystyle \lim_{x \rightarrow 0} f(x) = 0, & \displaystyle \lim_{x \rightarrow 1} f(x) = 0
    \end{array}
\end{equation}
Since, KLD is always non-negative, we always have $f(x) \geq 0$. Also, since any LRT curve $y = g_{LRT}(x)$ cuts through the level-sets of $D_{FC}$ and is concave, $f(x)$ is a quasi-concave function of $x$.} function of $x$. As shown in Proposition \ref{Thrm: conjecture-theorem}, since the Eve's constraint translates into the convexity of $D_{FC}$ with respect to $x$, there are at most two points of intersection for the curves $y = g_{LRT}(x)$ and $D_E = \tilde{\alpha}$, of which, one of them corresponds to the optimal quantizer. We present this formally in the following claim.

\begin{claim}
    Let $f(x) \triangleq D_{FC}(x, y = g_{LRT}(x))$. If $f(x)$ is a quasi-concave function of $x$, then there are at most two intersection points for the curves $y = g_{LRT}(x)$ and $D_E = \tilde{\alpha}$. The optimal quantizer corresponds to one of the two intersection points.
    \label{Claim: Intersection Points}
\end{claim}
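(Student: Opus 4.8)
The plan is to reduce the two-dimensional question of counting curve intersections to a one-dimensional question about level crossings of a quasi-concave function. Parametrizing the LRT curve by its false-alarm probability $x$, I would define $g(x) \triangleq D_E(x, g_{LRT}(x))$, the Eve's KL Divergence evaluated along the LRT curve. An operating point is an intersection of $y = g_{LRT}(x)$ and $D_E = \tilde{\alpha}$ if and only if it has the form $(x, g_{LRT}(x))$ with $g(x) = \tilde{\alpha}$. Thus the intersection points are in one-to-one correspondence with the roots of $g(x) = \tilde{\alpha}$, and it suffices to bound the number of such roots.

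First I would argue that $g$ is quasi-concave in $x$. This follows by exactly the reasoning used in the footnote that establishes the quasi-concavity of $f(x) = D_{FC}(x, g_{LRT}(x))$: we have $g(0) = g(1) = 0$, $g \geq 0$, and the concave LRT curve cuts through the convex level sets of $D_E$. Equivalently, $D_E$ has the identical functional form as $D_{FC}$ --- it is the KL Divergence of the same operating point after passing through a BSC, only with transition probability $\rho_e$ in place of $\rho_{fc}$ --- so the hypothesis that $f$ is quasi-concave transfers verbatim to $g$. Since $g$ is continuous and quasi-concave, its superlevel set $\{ x \in [0,1] \ | \ g(x) \geq \tilde{\alpha} \}$ is a (possibly empty or degenerate) closed interval $[x_1, x_2]$. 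Consequently the equation $g(x) = \tilde{\alpha}$ has at most two solutions, namely the endpoints $x_1$ and $x_2$ of this interval, which establishes that $y = g_{LRT}(x)$ and $D_E = \tilde{\alpha}$ meet in at most two points.

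It remains to show that the optimum is attained at one of these intersection points. Here I would invoke Proposition \ref{Thrm: conjecture-theorem-nonideal}: along the Eve's constraint boundary $D_E = \tilde{\alpha}$, the objective $D_{FC}$ is a strictly convex function of $x$, so over any closed segment of that boundary its maximum is attained at an endpoint. As argued in the proof of Theorem \ref{Thrm: LRT-Optimal}, the optimal operating point lies on this boundary, and the feasible portion of the boundary (the part contained in $\Gamma$) terminates precisely where the boundary meets the LRT curve. Hence the maximizing endpoint is one of the at most two intersection points found above, and it is the one carrying the larger value of $D_{FC}$.

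The main obstacle I anticipate is the rigorous justification of the quasi-concavity of $g$, since the footnote's argument for $f$ is essentially geometric and would need to be made precise (or simply assumed, mirroring the hypothesis placed on $f$); in particular one must exclude a plateau of $g$ at exactly the level $\tilde{\alpha}$, which strict convexity of $D_{FC}$ along the boundary together with the strict concavity of the LRT curve should rule out by forcing transversal, isolated crossings. A secondary point requiring care is the treatment of the interval endpoints: one must verify that $x_1$ and $x_2$ genuinely correspond to intersections with the LRT curve rather than being clipped by the domain boundary $x \in \{0,1\}$ or by the lower boundary $y = x$ of $\Gamma$. In any such degenerate case the count of true intersections only decreases, which preserves the ``at most two'' conclusion, and the optimal-endpoint argument continues to apply to whichever feasible intersection is retained.
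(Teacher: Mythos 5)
Your proposal is correct, and in substance it is the argument the paper itself relies on; the only real difference is which mechanism carries the intersection count. The paper's nominal justification for the claim is Proposition \ref{Thrm: conjecture-theorem} (``the Eve's constraint translates into the convexity of $D_{FC}$ with respect to $x$''), but taken literally that is weaker than it sounds: strict convexity of $D_{FC}$ along $D_E = \tilde{\alpha}$ together with quasi-concavity of $f$ does not by itself exclude a third crossing --- at three common abscissas $x_1 < x_2 < x_3$ with values $v_i$, the requirements $v_2 \geq \min(v_1, v_3)$ and $v_2$ strictly below the chord are simultaneously satisfiable (e.g. $v = (1, 1.5, 3)$ at equispaced points). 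Your mechanism --- unimodality of $D_E$ restricted to the LRT curve, so that the superlevel set $\{ x \ | \ g(x) \geq \tilde{\alpha} \}$ is an interval whose two endpoints are the only crossings --- is exactly the device the paper actually deploys in Claim \ref{Claim: Zeros of h} and in the illustrative example, where $h(\lambda) = D_E(x(\lambda), y(\lambda)) - \tilde{\alpha}$ is observed to be quasi-concave with tails converging to $-\tilde{\alpha}$, and it is the one that genuinely delivers the ``at most two'' count. What your write-up makes explicit, and the paper leaves implicit, is the transfer step: the claim's hypothesis concerns $D_{FC}$ along the LRT curve (parameter $\rho_{fc}$), whereas the counting needs the same property for $D_E$ (parameter $\rho_e$); since the two functions have identical form, the footnote's geometric argument applies verbatim, but this is strictly an additional assumption, which you correctly flag as the main obstacle rather than paper over. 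Your optimality argument --- the optimum lies on the Eve boundary, is a maximum of the strictly convex function of Proposition \ref{Thrm: conjecture-theorem-nonideal} over a segment of that boundary, hence sits at an endpoint, and the endpoints must be LRT-curve crossings because the boundary $D_E = \tilde{\alpha}$ cannot meet the diagonal $y = x$ where $D_E = 0 < \tilde{\alpha}$ --- coincides with Case 2 of the paper's proof of Theorem \ref{Thrm: LRT-Optimal}, with the diagonal-exclusion detail being a small but worthwhile addition.
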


Therefore, the problem reduces to finding these two intersection points and comparing them with respect to each other in terms of their respective $D_{FC}$. Moreover, we wish to find the threshold $\lambda^*$ for the LRT that maximizes $D_{FC}$ in the presence of Eve's constraint. Since, both $x$ and $y$ are tail-probabilities where the start of the tail is the threshold, $x$ and $y$ are both monotonically decreasing functions of the threshold $\lambda$. Therefore, we have the following claim.

\begin{claim}
    The two intersection points can be found by investigating the zeros of the function $h(\lambda) \triangleq D_E(x(\lambda), y(\lambda)) - \tilde{\alpha}$, where $x$ and $y$ are parameterized by the LRT threshold $\lambda$.
    \label{Claim: Zeros of h}
\end{claim}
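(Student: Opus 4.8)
The plan is to recast the geometric problem of locating the intersections of the LRT curve with the Eve's constraint boundary as the purely one-dimensional problem of locating the zeros of $h(\lambda)$. The starting observation is that the LRT curve admits a natural parameterization by the threshold $\lambda$ through the map $\lambda \mapsto (x(\lambda), y(\lambda))$. As noted just prior to the claim, both $x(\lambda)$ and $y(\lambda)$ are strictly monotonically decreasing in $\lambda$ (they are tail probabilities whose tails begin at the threshold), so this map is injective and, as $\lambda$ ranges over its full domain, sweeps out the entire curve $y = g_{LRT}(x)$ from $(1,1)$ down to $(0,0)$. Consequently the map is a bijection between the admissible thresholds and the points of the LRT curve.

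With this bijection in hand, the correspondence between intersection points and zeros of $h$ is immediate. Every point $(x(\lambda), y(\lambda))$ already lies on the LRT curve by construction, so it is an intersection with the set $\{(x,y) \mid D_E = \tilde{\alpha}\}$ precisely when it additionally satisfies $D_E(x(\lambda), y(\lambda)) = \tilde{\alpha}$, i.e. when $h(\lambda) = D_E(x(\lambda), y(\lambda)) - \tilde{\alpha} = 0$. Because the parameterization is a bijection, distinct intersection points correspond to distinct zeros of $h$ and conversely, so the two sets are in one-to-one correspondence. Invoking Claim \ref{Claim: Intersection Points}, which guarantees at most two intersection points when $f(x)$ is quasi-concave, $h$ therefore has at most two zeros, and each candidate optimal quantizer is recovered from the threshold $\lambda$ at which the corresponding zero occurs.

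To make these zeros locatable for the accompanying algorithm, I would next analyze the sign of $h$ at the extremes of the threshold range. As $\lambda$ grows the operating point tends to $(0,0)$ and as $\lambda$ shrinks it tends to $(1,1)$; at either endpoint the transformations \eqref{Eqn: Eve-ROC-x-e-def}--\eqref{Eqn: Eve-ROC-y-e-def} give $x_e = y_e$ (equal to $\rho_e$ and $1-\rho_e$ respectively), whence $D_E = 0$ and $h \to -\tilde{\alpha} < 0$. Under the non-degeneracy assumption $\tilde{\alpha} < D_E^*$, the Eve's KL divergence at the (interior) FC-optimal LRT equals $D_E^* > \tilde{\alpha}$, so $h$ is strictly positive at that interior threshold. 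Continuity of $h$ together with the intermediate value theorem then forces a sign change on each side of this interior point, giving at least two zeros; combined with the at-most-two bound this pins the count at exactly two, and each zero can be bracketed between the interior threshold and an endpoint and extracted by a standard root-finding routine such as bisection.

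The argument is short because, once the parameterization is fixed, the equivalence ``intersection point $\Leftrightarrow$ zero of $h$'' is essentially definitional. The only steps requiring genuine care are (i) confirming that $\lambda \mapsto (x(\lambda), y(\lambda))$ is strictly monotone and surjective onto the LRT curve, so that the exact matching of intersection points to zeros truly holds, and (ii) the endpoint sign computation, which is the one place where the specific BSC transformation and the hypothesis $\tilde{\alpha} < D_E^*$ are needed to guarantee that the zeros exist and can be numerically isolated. I expect (i) to be the main, if mild, obstacle: any flat or degenerate segment of the ROC would break injectivity and would have to be explicitly excluded for the bijection to be valid.
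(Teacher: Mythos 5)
Your proof is correct and takes essentially the same route as the paper: the monotone (tail-probability) parameterization of the LRT curve by $\lambda$ makes the equivalence between intersection points and zeros of $h(\lambda)$ definitional, and your endpoint analysis ($h \to -\tilde{\alpha}$ at both tails, $h > 0$ at an interior threshold when $\tilde{\alpha} < D_E^*$) matches the paper's own observation that $h$ is unimodal with tails converging to $-\tilde{\alpha}$, yielding the two roots located by bisection. The only difference is one of rigor, not of approach --- you make explicit the injectivity/bijection and intermediate-value steps that the paper leaves implicit.
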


Let $\tilde{\alpha}_{max}$ denote the value of KL Divergence at which $D_E$ reaches its maximum value. In other words, the optimal quantizer in the absence of Eve (equivalent to $\tilde{\alpha} = \infty$), denoted as the operating point $(x_{\infty}, y_{\infty})$, is the same as the optimal quantizer for any $\tilde{\alpha} \geq \tilde{\alpha}_{max}$. Obviously, the function $h(\lambda)$ has two real zeros only when $\tilde{\alpha} < \tilde{\alpha}_{max}$. Note that only one of them provides the maximum KL Divergence at the FC.

In order to find both zeros of the function $h(\lambda) = 0$, we use the bisection method where we first find the point $\lambda^*$ at which $h(\lambda)$ attains its maximum value. Then, consider two points, one on either side of $\lambda^*$ (which are at a significant distance from $\lambda^*$) as initial points and use the bisection algorithm to find the roots of $h(\lambda) = 0$. We call these two zeros as $\lambda_1$ and $\lambda_2$. Then, we compute and compare $D_{FC}$ at the operating points $(x(\lambda_1), y(\lambda_1))$ and $(x(\lambda_2), y(\lambda_2))$. We choose that threshold as the optimal choice, which results in the maximum $D_{FC}$.

For the sake of illustration, we present an example where the sensors observe the presence or absence of a known deterministic signal, which is corrupted by additive Gaussian noise.


\subsection{Illustrative Example \label{sec: Example}}
We have so far shown that the optimal quantizer lies at the intersection of the curves $D_E = \tilde{\alpha}$ and the LRT boundary in the ROC. But, the structure of the LRT is specific to the observation model, and therefore, it is difficult to characterize the optimal sensor quantizer, in general. Therefore, we illustrate the design methodology for an example, where the sensors observe the presence or absence of a known deterministic signal. In other words, the observations at the $i^{th}$ sensor are modeled as follows.

\begin{equation}
    r_{i,t} = \begin{cases} n_{i,t} & \mbox{if } H_0 \\ \theta + n_{i,t} & \mbox{if } H_1 \end{cases}
\end{equation}
where $\theta$ is the signal-of-interest and $n_{i,t} \sim \mathcal{N}(0, \sigma^2)$ is the additive Gaussian noise with zero mean and variance $\sigma^2$. Then, the probabilities of false alarm and detection are given by
\begin{equation}
    x = \displaystyle Q\left( \frac{\lambda}{\sigma} \right), \ y = \displaystyle Q\left( \frac{\lambda - \theta}{\sigma} \right)
    \label{Eqn: Parameterization}
\end{equation}
where $Q(\cdot)$ is the tail probability of the standard normal distribution $\mathcal{N}(0,1)$.

Substituting Equation \eqref{Eqn: Parameterization} in Equation \eqref{Eqn: KLD - sensor}, we obtain the KL Divergence at the sensor, which is observed to be concave for this example.
Therefore, as stated in Claim \ref{Claim: Intersection Points}, the optimum quantizer is given by the intersection of the LRT boundary in the ROC with the Eve's constraint $D_E = \tilde{\alpha}$.

Note that Equation \eqref{Eqn: Parameterization} is a parameterization of the LRT boundary, where both the ROC's coordinates are parameterized with the threshold of the LRT. Since we are interested in the intersection of the LRT's boundary in the ROC with the Eve's constraint $D_E = \tilde{\alpha}$, we substitute $x_e = \displaystyle \rho_e + (1 - 2 \rho_e) Q\left( \frac{\lambda}{\sigma} \right)$ and $y_e = \displaystyle \rho_e + (1 - 2 \rho_e) Q\left( \frac{\lambda - \theta}{\sigma} \right)$ in $D_E$ to obtain $h(\lambda) = D_E(x(\lambda), y(\lambda)) - \tilde{\alpha}$.

As shown in Figure \ref{Fig: h-vs-lambda}, $h(\lambda)$ is a quasi-concave function of $\lambda$, with the tails converging to $-\tilde{\alpha}$. In other words, there are at most two zero-crossings since the function $h(\lambda)$ is unimodal with the two tails converging to a value less than zero. Therefore, there are at most two solutions to the equation $h(\lambda) = 0$. The optimum sensor threshold can be found by investigating the two zeros of $h(\lambda)$, as suggested in Claim \ref{Claim: Zeros of h}, and comparing them in terms of $D_{FC}$.

\begin{figure}[!t]
	\centering
    \includegraphics[width=3.3in]{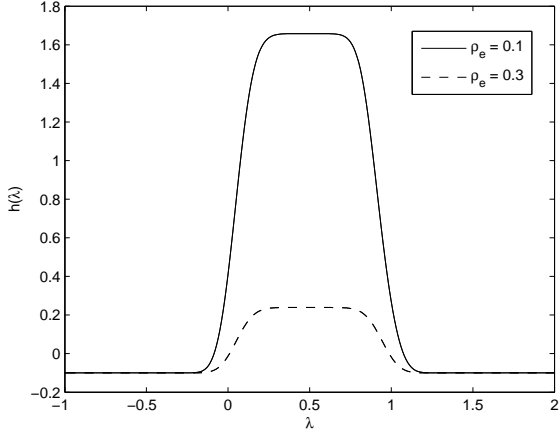}
    \caption{Plot of $h(\lambda)$ as a function of $\lambda$}
    \label{Fig: h-vs-lambda}
\end{figure}



\subsection{Discussion and Results}
In this subsection, we first discuss the impact of the secrecy constraint on the performance of the sensor network. Obviously, when we consider $\tilde{\alpha} = 0$, the network achieves perfect secrecy. But, this also forces the network to be \emph{blind} in that $D_{FC} \rightarrow 0$. On the other extreme, consider a scenario where $\tilde{\alpha} \rightarrow \infty$. This is equivalent to the case where there is no eavesdropper present in the network. In other words, the optimal quantizer is given by $(x_{\infty}, y_{\infty})$. For any finite $\tilde{\alpha} > 0$, we numerically investigate the tradeoff between secrecy and performance of a given distributed detection system.
\begin{figure}[!t]
	\centering
    \includegraphics[width=3.3in]{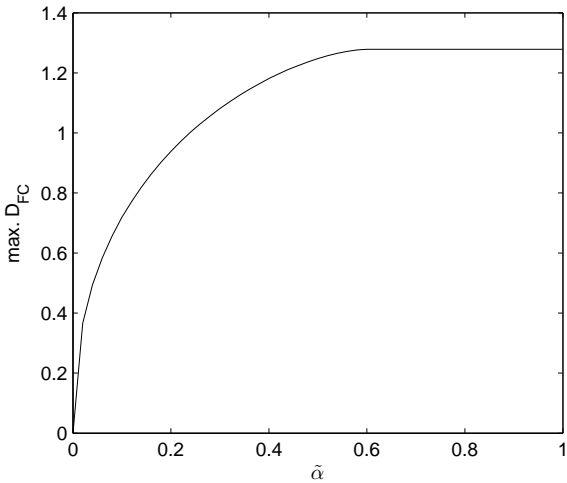}
    \caption{Tradeoff between maximum $D_{FC}$ and $\tilde{\alpha}$.}
    \label{Fig: Tradeoff}
\end{figure}


Since $\tilde{\alpha}$ is the tolerable limit on the performance of Eve, the greater the information leakage we can tolerate, the better the performance of the distributed detection network. This tradeoff is captured by Figure \ref{Fig: Tradeoff}, where the maximum $D_{FC}$ in the presence of a constrained Eve increases with increasing $\tilde{\alpha}$. Note that, beyond a certain value of $\tilde{\alpha}$, the maximum $D_{FC}$ gets saturated to the optimal KLD at the FC in the absence of Eve. This saturation level for this example is 5.8 and it is dictated by the fundamental limits enforced by the imperfect observations and channel models within the network.

Next, we demonstrate the impact of the Eve's constraint on the ROC, as well as the KL Divergence at the FC, in Figure \ref{Fig: Eve-constrained}, when the FC's channels are ideal ($\rho_{fc} = 0$). Note that this argument can be carried over to any general BSC at the FC, as the operating point $(x_{fc}, y_{fc})$ is a linear transformation of $(x,y)$. In Figure \ref{Fig: Eve-constrained}, we assume $\rho_e = 0.1$ and consider two different values of $\tilde{\alpha}$. In Figure \ref{Fig: ROC-constrained}, we plot the constraint curve $D_E = \tilde{\alpha}$ along with the sensor's ROC. Note that the constraint curve intersects the LRT curve at two distinct points, as stated earlier. One of these two intersection points (the intersection point to the right, in this example) is optimal, as shown in Figure \ref{Fig: D_FC-constrained}. Note that the skewness in the ellipses in Figure \ref{Fig: D_FC-constrained} is due to the asymmetry in the KL divergence. Also, as $\tilde{\alpha}$ decreases, $D_{FC}$ becomes deeper and flat-bottomed as a function of $x$ over the Eve's constraint curve $D_E = \tilde{\alpha}$. Another important observation to be made is the fact that the optimal solution in the presence and absence of Eve (red curves) always is on the boundary of the LRT curve, although the thresholds vary depending on the scenario. Since the sufficient test-statistic is the same irrespective of the presence or absence of Eve, the network designer may implement the system in terms of a threshold that can be varied.

\begin{figure*}[!t]
    \centerline
    {
        \subfloat[Sensor's ROC in the presence of Eve]{\includegraphics[width=3.3in]{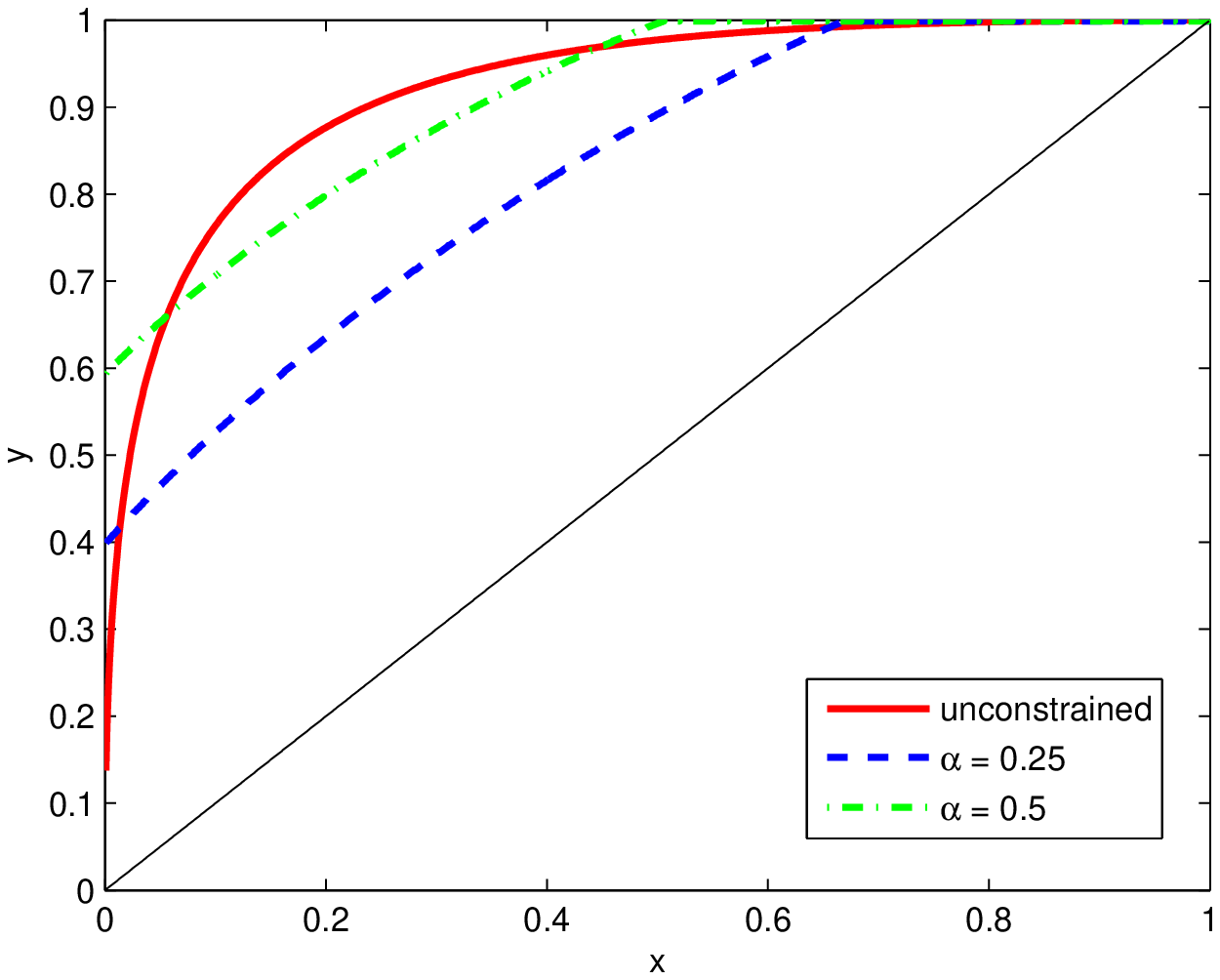}%
        \label{Fig: ROC-constrained}}
        \hfil
        \subfloat[$D_{FC}$ as a function of $x$]{\includegraphics[width=3.3in]{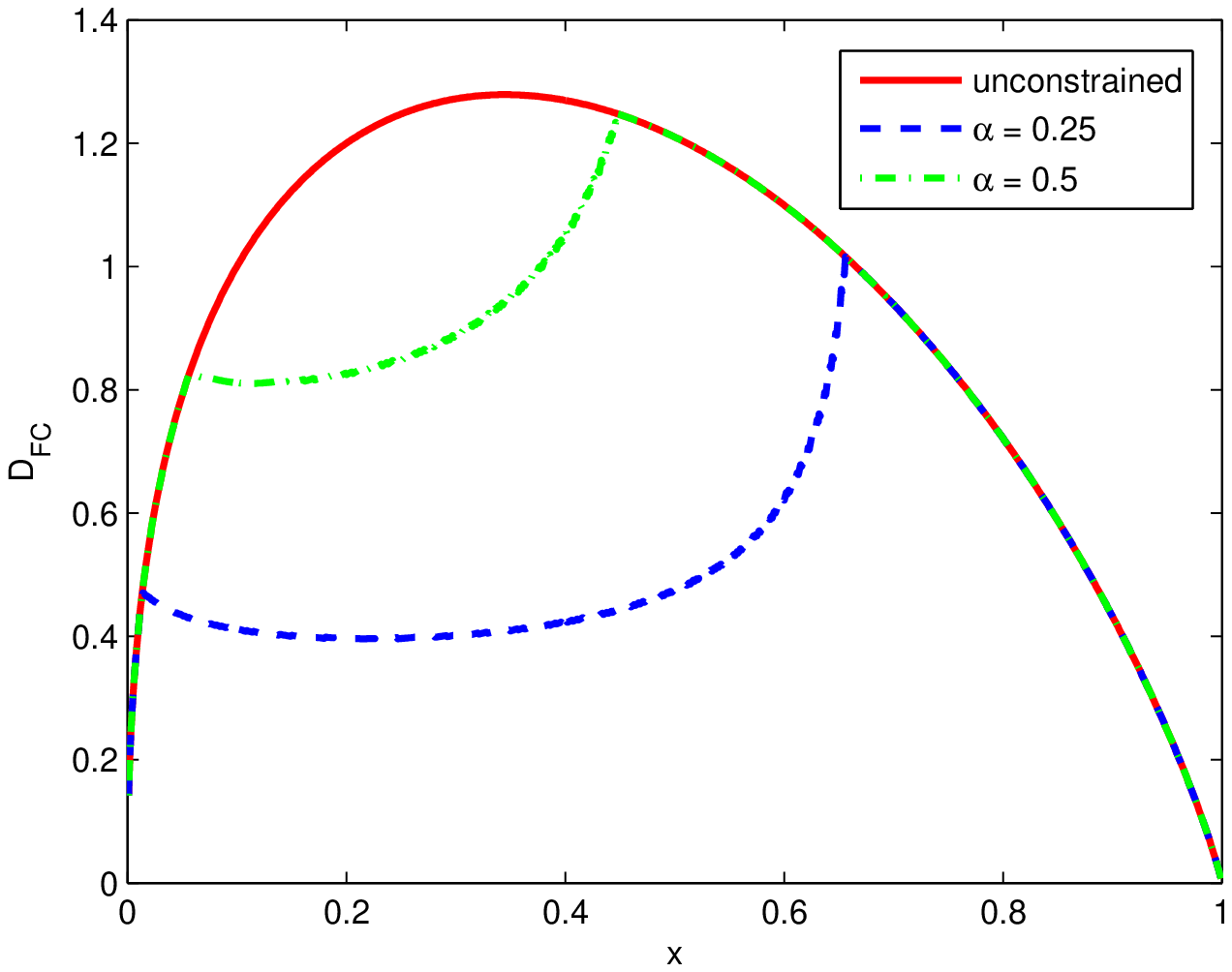}%
        \label{Fig: D_FC-constrained}}
    }
    \caption{Sensor performance in the presence of a constraint, $D_E \leq \tilde{\alpha}$, where $\rho_e = 0.1$.}
    \label{Fig: Eve-constrained}
\end{figure*}



In practice, there exist many conditional probability distributions $p_0(r)$ and $p_1(r)$ for which the computation of likelihood-ratios is intractable. Also, there may be situations where these distributions are not even known to the network designer. In both these cases, the network designer may choose to employ a tractable test that is not LRT. 

Let $\Lambda$ be the test-statistic employed in the sensor quantizer $\gamma$, as defined in Equation \eqref{Eqn: Quantizer-rule}. Note that, by allowing randomization (linear stochastic combination of operating points) between quantizers, Carath\`{e}odary's theorem \cite{Book-Rockafeller1996} and Lemma \ref{Lemma: ROC-Operating-Point-Transformation} in Appendix \ref{sec: Transformations} together makes every operating point $(x,y)$ inside the set $\Psi \triangleq \conv \left(\{ (x,y) \ | \ y \leq g_{\Lambda}(x) \} \right)$ feasible, where $\conv(\mathcal{S})$ represents the convex-hull of a given set $\mathcal{S}$. 


Since $\Psi_i$ is convex, all of our arguments presented in Section \ref{sec: Identical Quantizer Design} also hold for the case of any general non-LRT quantizer. We summarize this in the following claim:
\begin{claim}
	Given any ROC curve $y = g_{\Lambda}(x)$ based on a test-statistic $\Lambda$, the optimal quantizer that maximizes the FC's KL Divergence $D_{FC}$ in the presence of a constraint on Eve's KL Divergence $D_E = \tilde{\alpha}$ within the set $\tilde{\Psi}_i \triangleq \conv \{ (x,y) \ | \ y \leq g_{\Lambda}(x) \}$ always lies on the boundary of $\tilde{\Psi}_i$.
\end{claim}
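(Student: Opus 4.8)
The plan is to reproduce the argument that established Theorem \ref{Thrm: LRT-Optimal}, with the convex set $\tilde{\Psi}_i$ taking over the role previously played by $\Gamma_i$. The key observation is that Section \ref{sec: Identical Quantizer Design} used only two properties of $\Gamma_i$: that it is convex, and that its upper frontier is the achievable boundary of operating points. Both survive the replacement. Convexity of $\tilde{\Psi}_i$ is immediate, since it is defined as a convex hull; and the achievability of its boundary points via randomization follows from Carath\`{e}odory's theorem \cite{Book-Rockafeller1996} together with Lemma \ref{Lemma: ROC-Operating-Point-Transformation}, exactly as noted in the paragraph preceding the claim.

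First I would define the feasible region $\mathcal{R}_i \triangleq \tilde{\Psi}_i \cap \{ \ (x,y) \ | \ D_E \leq \tilde{\alpha} \ \}$ and observe that, being the intersection of the convex hull $\tilde{\Psi}_i$ with the convex level-set of the convex function $D_E$, it is convex (and compact, as a closed subset of the unit square). Since $D_{FC}$ is a convex function of $(x,y)$, maximizing it over $\mathcal{R}_i$ attains its optimum at an extreme point of $\mathcal{R}_i$ \cite{Book-Rockafeller1996}. These extreme points fall into two groups: those lying on $\partial \tilde{\Psi}_i$, and those lying on the Eve's constraint boundary $\{ \ (x,y) \ | \ D_E = \tilde{\alpha} \ \}$ in the interior of $\tilde{\Psi}_i$.

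The crucial step is to rule out the second group. Here I would invoke Proposition \ref{Thrm: conjecture-theorem-nonideal}, which asserts that $D_{FC}$ is strictly convex in $x$ along the arc $D_E = \tilde{\alpha}$. This proposition, together with its antecedent Proposition \ref{Thrm: conjecture-theorem}, is a statement purely about the geometry of the curve $D_E = \tilde{\alpha}$ in the ROC plane: it never references $\Gamma_i$, the LRT curve, or any particular test-statistic, and therefore applies unchanged in the present setting. Consequently, the restriction of $D_{FC}$ to the feasible arc of $\{ \ (x,y) \ | \ D_E = \tilde{\alpha} \ \}$ lying inside $\tilde{\Psi}_i$ attains its maximum at an endpoint of that arc, and each such endpoint is an intersection point with $\partial \tilde{\Psi}_i$. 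Combining this with the degenerate case $\tilde{\alpha} \geq D_E^*$ (where the constraint is inactive and the unconstrained maximum of the convex $D_{FC}$ over the convex $\tilde{\Psi}_i$ already sits on $\partial \tilde{\Psi}_i$), every candidate optimum lies on the boundary of $\tilde{\Psi}_i$, which is the assertion of the claim.

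I do not anticipate a genuine obstacle, since all of the analytic content was already discharged in Propositions \ref{Thrm: conjecture-theorem} and \ref{Thrm: conjecture-theorem-nonideal}. The one point requiring care is confirming that every boundary point of the convex hull $\tilde{\Psi}_i$ is realizable as a (possibly randomized) quantizer, so that the extreme point delivered by the convex-maximization argument corresponds to an implementable test rather than a merely geometric vertex. This is precisely what the Carath\`{e}odory-plus-Lemma \ref{Lemma: ROC-Operating-Point-Transformation} randomization construction guarantees, so the verification amounts to citing that construction rather than developing new machinery.
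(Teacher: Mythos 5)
Your proposal is correct and follows essentially the same route as the paper: the paper disposes of this claim in a single sentence, asserting that since $\tilde{\Psi}_i$ is convex, all arguments of Section \ref{sec: Identical Quantizer Design} (convex maximization over a convex feasible region, Propositions \ref{Thrm: conjecture-theorem} and \ref{Thrm: conjecture-theorem-nonideal} to exclude maxima on the interior arc of $D_E = \tilde{\alpha}$, and Carath\`{e}odory-plus-Lemma \ref{Lemma: ROC-Operating-Point-Transformation} randomization for achievability of boundary points) carry over with $\tilde{\Psi}_i$ in place of $\Gamma_i$. Your write-up simply makes that transfer explicit, including the same case split between an inactive and an active Eve constraint used in the proof of Theorem \ref{Thrm: LRT-Optimal}.
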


As discussed earlier in this subsection, this optimal operating point can be implemented by randomizing over a finite set of quantizers, all defined using the same test statistic $\Lambda$.


\section{Efficient Quantizer Design in the Presence of Non-Identical Sensors and Channels \label{sec: Non-Identical Quantizer Design}}

In Section \ref{sec: Identical Quantizer Design}, we investigated the case of identical sensors and channels which was similar to the case of designing the quantizer at a single sensor. In this section, we investigate Problem \ref{Problem: General} when the network has non-identical sensors and/or has non-identical channels. Since Problem \ref{Problem: General} is NP-Hard in general, we propose an efficient methodology for quantizer design that satisfies the Eve's constraint $\mathcal{D}_E \leq \alpha$.

%

Note that the objective function $\mathcal{D}_{FC}$ is linearly separable since the sensor observations are conditionally independent. Therefore, we define
\begin{equation}
    \Phi_n = \Phi_{n-1} + D_{FC_n}, \ \forall \ n = 2, \cdots, N.
    \label{Eqn: Decomposition}
\end{equation}
where $\Phi_1 = D_{FC_1}$. If, at any given intermediate stage, if $\Phi_{n-1}$ is a constant, then the problem of maximizing $\Psi_n$ reduces to the problem of maximizing $D_{FC_n}$.

This above property of KL Divergence at the FC can be used to decompose Problem \ref{Problem: General} into $N$ sub-problems by breaking down the Eve's constraint parameter $\alpha$ into $\boldsymbol\alpha = \{ \alpha_1, \cdots, \alpha_N \}$ using dynamic programming \cite{Book-Bellman} that resembles the \emph{waterfilling} algorithm. Here, for the sake of ensuring the feasibility of our solution, we assume the following.
$$\displaystyle \sum_{i = 1}^N \alpha_i \leq \alpha.$$

Therefore, for a given $\boldsymbol\alpha$, Problem \ref{Problem: General} becomes:
\begin{prob-statement}
	\label{Problem: General-Decomposed}
	For every $i = 1, \cdots, N$, find
    \begin{flalign*}
    	\displaystyle \argmax_{\boldsymbol\gamma} & \quad D_{FC_i} \quad \mbox{ s.t.}
        \\ & \text{1. } \quad D_{E_i} \ \leq \ \alpha_i
        \\ & \text{2. } \quad (x_i, y_i) \in \Gamma_i, \mbox{ for all } i = 1, \cdots, N.
    \end{flalign*}
\end{prob-statement}

Note that the performance of this proposed design-methodology completely depends on the choice of $\boldsymbol\alpha = \{ \alpha_1, \cdots, \alpha_N \}$. To be more precise, the exact solution to Problem \ref{Problem: General} can be equivalently expressed in terms of an optimal decomposition of $\alpha$ into $\boldsymbol\alpha = \{ \alpha_1, \cdots, \alpha_N \}$. Since the problem of finding optimal $\boldsymbol\alpha$ is intractable, we present a suboptimal (greedy) algorithm to find an efficient decomposition of $\alpha$ as follows.

Let $D_{FC_i}^*$ denote the maximum KL Divergence achievable at the FC, due to the $i^{th}$ sensor. In such a setting, Eve attains a KL Divergence $D_{E_i}^*$ due to the $i^{th}$ sensor. We define the quality of the FC's and the Eve's channels corresponding to the $i^{th}$ sensor as $k_i = \displaystyle \frac{D_{FC_i}^*}{D_{E_i}^*}$. The quality $k_i$ represents the tradeoff between the detection performance and secrecy. Let the sensors be ordered in terms of the increasing quality as $k_{i_1} \geq \cdots \geq k_{i_N}$. In other words, we obtain the best tradeoff in terms of the sensor quality by considering sensors in the order of decreasing quality in our sequential allocation mechanism. Therefore, we propose a greedy decomposition of Problem \ref{Problem: General-Decomposed} into $N$ sequential problems based on the sensors' quality, where $\boldsymbol\alpha = \{ \alpha_1, \cdots, \alpha_N \}$ is chosen such that $\mathcal{D}_{FC}$ is maximized in the presence of Eve's constraint $\mathcal{D}_E \leq \alpha$. Note that this decoupling of $\alpha$ into $\boldsymbol\alpha$ allows us to solve each of the individual problems in Problem \ref{Problem: General-Decomposed} using the same method as presented in Section \ref{sec: Identical Quantizer Design}.

Having ordered the nodes in terms of decreasing $k_i^*$, we know that node $i$ achieves better tradeoff than node $j$, if $i > j$. This allows us to select nodes with lower indices to achieve the best tradeoffs between detection performance and secrecy until the resource (constraint on Eve, $\alpha$) is completely utilized. Therefore, the decomposition of $\mathcal{D}_{FC}$, as shown in Equation \eqref{Eqn: Decomposition}, allows us to sequentially select the individual sensors in an increasing order of indices. Therefore, for index $i = 1$, we allocate $\alpha_1 = D_{E_1}^*$ if $\alpha \geq D_{E_1}^*$. Otherwise, $\alpha_1 = \alpha$. Having allocated the Eve's constraint to Sensor 1, we move to Sensor 2. Now, the remaining tolerable leakage information at the Eve is given by $[\alpha - D_{E_1}^*]_+$, where $[x]_+ = x$ if $x \geq 0$, or, $0$ otherwise. Therefore, we solve the problem at Sensor 2 with a new constraint $[\alpha - D_{E_1}^*]_+$.

As the process of selecting the nodes progresses, we reach a point where $N^*$ sensors are already selected and the remaining resource left, given by $\alpha - \displaystyle \sum_{i=1}^{N^*} D_{E_i}^*$, is less than $D_{E_{N^* + 1}}$. Therefore, we let $\alpha_{N^* + 1} = \alpha - \displaystyle \sum_{i=1}^{N^*} D_{E_i}^*$ and let the remaining sensors sleep in order to satisfy the secrecy constraint.

\subsection{Numerical Results}
In order to illustrate the performance of the proposed algorithm, we consider a simple example where, for each $i = 1, \cdots, N$, the $i^{th}$ sensor's observation follows $\mathcal{N}(0, \sigma^2)$ under hypothesis $H_0$ and $\mathcal{N}(\mu_i, \sigma^2)$ under hypothesis $H_1$. Note that this example demonstrates a scenario where the signal source is spaced at different distances from different sensors in the network, and the sensor observations are modelled using a path-loss attenuation channel model. In such a case, the detection probability at the $i^{th}$ sensor can be defined as $y_i = Q\left( Q^{-1}(x) - \eta_i \right)$ in terms of the false alarm probability $x_i$, where $\eta_i = \frac{\mu_i}{\sigma}$ is the corresponding SNR. Assuming that the FC has a perfect channel ($\rho_{fc_i} = 0$), while the Eve has a binary symmetric channel with transition probability $\rho_{e_i} = \rho_i$ at the $i^{th}$ sensor, we have $x_{fc_i} = x_i$, $y_{fc_i} = y_i$, $x_{e_i} = \rho_i + (1 - 2 \rho_i) x_i$ and $y_{e_i} = \rho_i + (1 - 2 \rho_i) y_i$. Then, the KL divergences at the FC and Eve are computed as shown in Equation \eqref{Eqn: KLD-FC-Eve}.


%

\begin{figure*}[!t]
	\centerline
    {
    	\subfloat[KLD vs. N]{\includegraphics[width=3.3in]{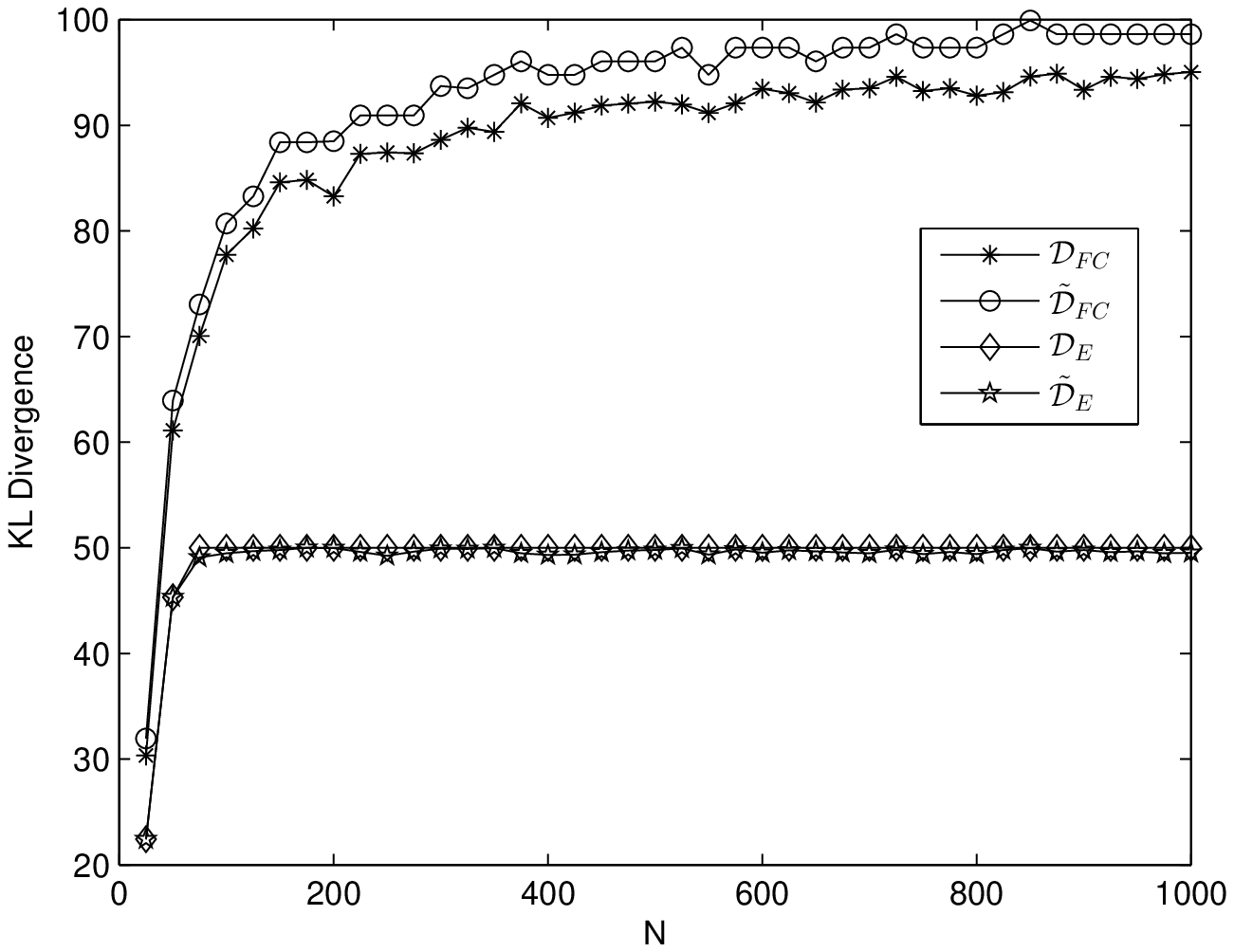}%
        \label{Fig: KLD-Greedy}}
        \hfil
        \subfloat[Number of Active Sensors vs. N]{\includegraphics[width=3.3in]{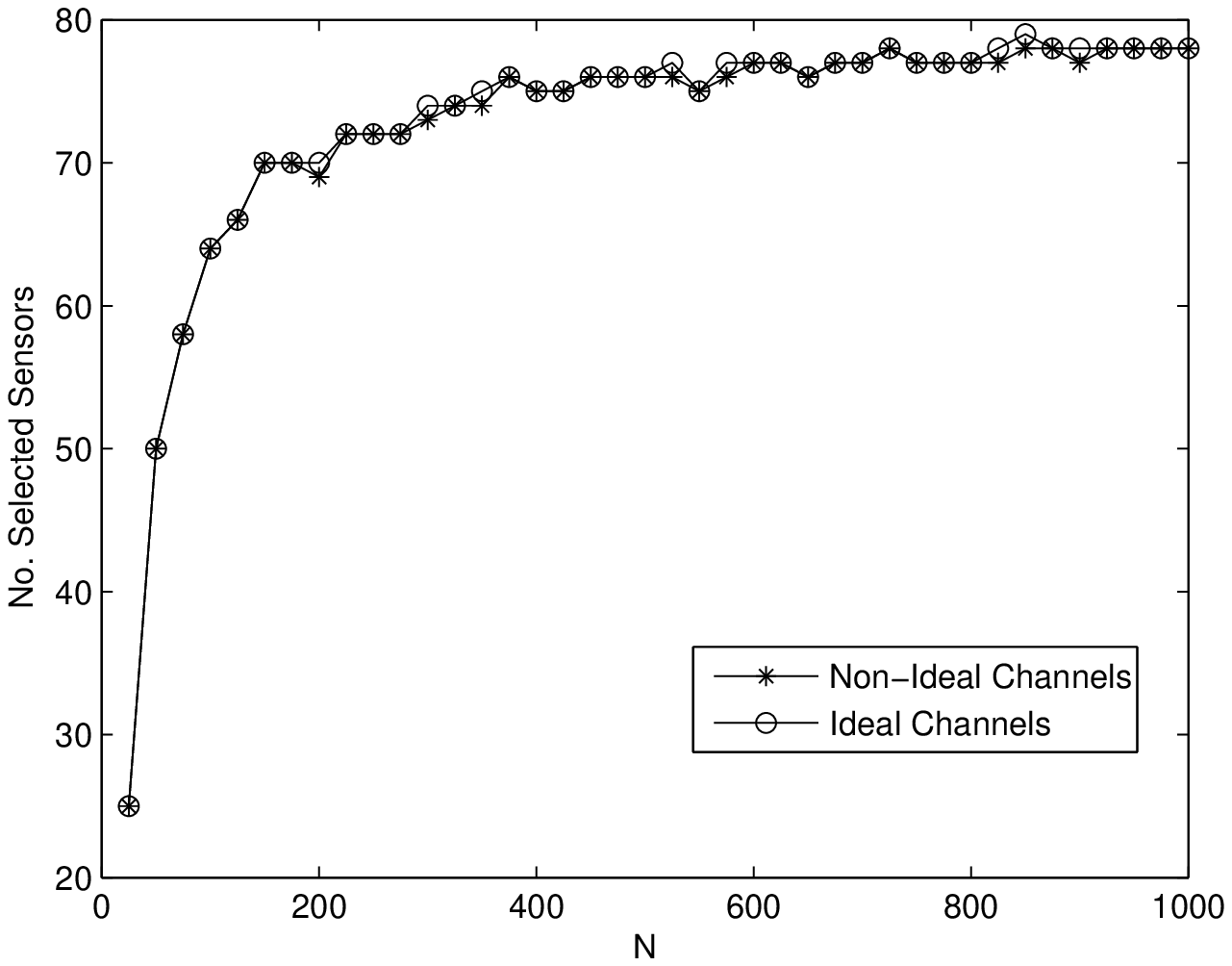}%
        \label{Fig: Active-Sensors}}
    }
    \caption{Performance of the Proposed Greedy Algorithm in a Distributed Inference Network when $\alpha = 50$.}
    \label{Fig: Sensor-Activity}
\end{figure*}

For the sake of illustration, we consider a specific example in order to demonstrate the performance of the proposed greedy algorithm. We assume that all the sensors have identical sensing channels by letting $\eta_i = 1$, for all $i = 1, \cdots, N$. The transition probabilities of the BSCs between the sensors and the FC are sampled randomly from a uniform distribution $\mathcal{U}(0, 0.01)$. Similarly, we let the Eve's channels' transition probabilities be sampled randomly from a uniform distribution $\mathcal{U}(0, 0.1)$. We present a single run of our simulation results in Figure \ref{Fig: Sensor-Activity}, where we present both the KL Divergence at the FC and Eve, along with the number of sensors selected in the network, as a function of $N$ when $\alpha = 50$. Note that, for $\alpha = 50$, the difference between the KL divergences between the FC and Eve is about 40 units. We also provide an upper bound on this difference using a benchmark comparison where we present the case where the FC has ideal channels. In the case where FC has ideal channels, the KL Divergences at the FC and Eve are denoted as $\tilde{\mathcal{D}}_{FC}$ and $\tilde{\mathcal{D}}_E$ respectively. Although the FC's KL divergence is always lower-bounded by Eve's KL divergence, the difference in the KL Divergences at the FC and Eve depend on the quality of the channels at both FC and Eve.

Also, note that, in Figure \ref{Fig: KLD-Greedy}, as the number of sensors increases, both $\mathcal{D}_{FC}$ and $\mathcal{D}_E$ monotonically increase until $N$ reaches a critical point where $\mathcal{D}_E = \alpha$. Beyond this critical point, the algorithm starts to select only those sensors that are prioritized according to the decreasing order of $k_i$. Furthermore, in Figure \ref{Fig: Active-Sensors}, the number of selected sensors increases with increasing number of sensors in the network at the similar rate as that of $\mathcal{D}_{FC}$. Lastly, note that the performance of the distributed inference network in terms of KL Divergence saturates as $N$ increases as per intuition.


\section{Conclusion \label{sec: Conclusion}}
In this paper, we investigated the problem of designing secure binary quantizers in a distributed detection network in the presence of binary symmetric channels. In the case of i.i.d. received symbols at the FC (likewise, i.i.d. received symbols at the Eve), we proved that LRTs are optimal in the presence of a tolerable constraint on Eve's performance. We proposed an algorithm to find the optimal LRT threshold, and presented numerical results to illustrate the performance of our network design. Furthermore, we also proposed efficient quantizer designs in the general case of non-i.i.d. received symbols at the FC and Eve by decomposing the original problem into $N$ sub-problems using a dynamic programming approach. Numerical results were presented to illustrate the efficiency of our proposed algorithm. In our future work, we will investigate the optimal decomposition of the original problem in the general case of non-i.i.d. received symbols at the FC and Eve. In addition, we will also investigate various mitigation schemes that enhance the performance of a distributed detection network in the presence of eavesdroppers.

\appendices

\section{Linear Transformations of Sensor Operating Points in the ROC Space \label{sec: Transformations}}
In this Appendix, we focus our attention on the transformation of the operating point of a single sensor due to the presence of a binary symmetric channel (BSC) between a given sensor and both the FC, as well as between the same sensor and Eve. Let the operating point of a given quantizer be $A = (x,y)$. As mentioned earlier, the sensor's quantizer characteristics $(x,y)$ are represented using its operating point in the sensor's ROC. Also, consider two BSCs with transition probabilities $\rho_1$ and $\rho_2$, each of which transforms the operating point $A = (x,y)$ into $B_1 = (x_1,y_1)$ and $B_2 = (x_2,y_2)$. Let $C = \left( \frac{1}{2}, \frac{1}{2} \right)$. In the following lemma, we present a useful relationship between $A$, $B_1$, $B_2$ and $C$.


\begin{lemma-app}
	Let $0 \leq \rho_1 \leq \rho_2 \leq \frac{1}{2}$. Then, $B_1$ and $B_2$ always lie on the line segment joining $A$ and $C$. In addition, the following inequality holds true.
	\begin{equation}
    	\label{Eqn: xy-inequality}
        \displaystyle \frac{x}{y} \leq \frac{x_1}{y_1} \leq \frac{x_2}{y_2} \leq 1 \leq \frac{1-x_2}{1-y_2} \leq \frac{1-x_1}{1-y_1} \leq \frac{1-x}{1-y}
	\end{equation}
	\label{Lemma: ROC-Operating-Point-Transformation}
\end{lemma-app}

\begin{proof}
    Consider a BSC with transition probability $\rho$, which transforms the operating point $A = (x,y)$ into $B = (\hat{x},\hat{y})$. Then, the equation of the line joining $A$ and $B$ is given by
    \begin{equation}
    	\displaystyle \frac{b-y}{a-x} = \frac{b-\hat{y}}{a-\hat{x}}
    	\label{Eqn: Line}
    \end{equation}
    where $(a,b)$ is some arbitrary point on the line.

    Substituting $\hat{x} = \rho + (1 - 2 \rho) x$ and $\hat{y} = \rho + (1 - 2 \rho) y$, we have
    \begin{equation}
    	\displaystyle \frac{b-y}{a-x} = \frac{b - \rho - (1 - 2 \rho) y}{a - \rho - (1 - 2 \rho) x}.
    	\label{Eqn: Line-2}
    \end{equation}

    Rearranging the terms in Equation \eqref{Eqn: Line-2}, we have
    \begin{equation}
    	\displaystyle (b-y)[a - \rho - (1 - 2 \rho) x] = (a-x)[b - \rho - (1 - 2 \rho) y].
    	\label{Eqn: Line-3}
    \end{equation}

    Simplifying Equation \eqref{Eqn: Line-3}, we have
    \begin{equation}
    	\displaystyle (a-b) + (y - x) = 2(ay - bx).
    	\label{Eqn: Line-4}
    \end{equation}

    Note that the line $a = b$ represents the set of operating points for which the KL Divergence becomes zero. Therefore, let us investigate the point where Equation \eqref{Eqn: Line} intersects the line $a = b$. Substituting $b = a$, we have
    $$\displaystyle (2 a - 1) (y-x) = 0.$$

    In other words, the line in Equation \eqref{Eqn: Line} intersects line $a = b = \frac{1}{2}$ for any transition probability $\rho$. In other words, the points $A$, $B_1$, $B_2$ and $C$ are collinear.

    \begin{figure}[!t]
    	\centering
        \includegraphics[width=3.5in]{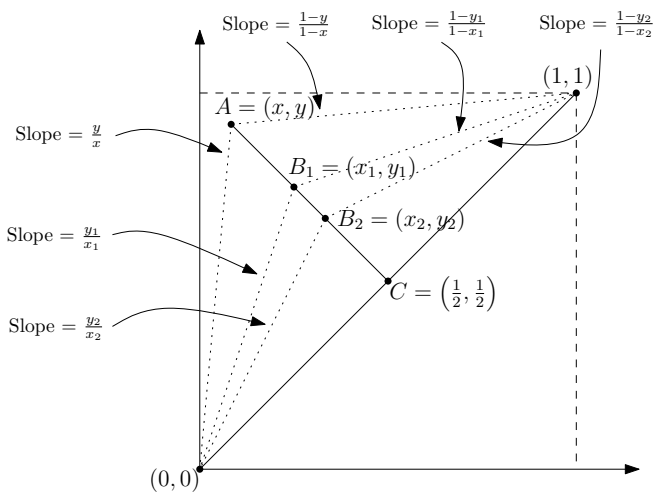}
        \caption{Transformations in the ROC}
        \label{Fig: Transformation}
    \end{figure}

    In fact, as $\rho \rightarrow \frac{1}{2}$, $B \rightarrow C$. In other words, for a given sensor's operating point $A$, the transformed operating point $B$ slides along the line segment joining $A$ and $C$. This sliding behavior can be investigated by analyzing the distance between $B$ and $C$, in terms of increasing $\rho$, as shown in Figure \ref{Fig: Transformation}. We denote the Euclidian distance between $B$ and $C$ as $\phi_{BC} = \sqrt{\left( \hat{x} - \frac{1}{2} \right)^2 + \left( \hat{y} - \frac{1}{2} \right)^2}$. Differentiating $\phi_{BC}$ with respect to $\rho$, we have
    \begin{equation}
        \begin{array}{lcl}
            \displaystyle \frac{d \phi_{BC}}{d \rho} & = & \displaystyle \frac{1}{\phi_{BC}} \left[ \left(\hat{x} - \frac{1}{2}\right)(1-2x) + \left(\hat{y} - \frac{1}{2}\right)(1-2y) \right]
            \\
            \\
            & = & \displaystyle \frac{-1 + \rho + (1 - 2\rho) [x(1-x) + y(1-y)]}{\phi_{BC}}
            \\
            \\
            & = & \displaystyle - \left( \frac{\rho + (1 - 2\rho) [1 - x(1-x) - y(1-y)]}{\phi_{BC}} \right)
            \\
            \\
            & \leq & 0,
        \end{array}
    \end{equation}
    since the function $x(1-x) + y(1-y)$ is concave and attains a maximum value of $\frac{1}{2}$ at $\left( \frac{1}{2}, \frac{1}{2} \right)$. In other words, $B$ slides towards $C$ as $\rho$ increases. Consequently, as shown in Figure \ref{Fig: Transformation}, $B_1$ is farther away from $C$ than $B_2$ on the line joining $A$ and $C$, since $0 \leq \rho_1 \leq \rho_2 \leq 1$.

    Note that the slope of the line joining $(0,0)$ and $B_1$ is $\frac{y_1}{x_1}$, and similarly, $\frac{y_2}{x_2}$ in the case of $B_2$. Since $B_2$ is closer to $B_1$ to $C$, as shown in Figure \ref{Fig: Transformation}, $\frac{y_1}{x_1} \geq \frac{y_2}{x_2}$ and the slope tends to 1 as the transition probability approaches $\frac{1}{2}$. A similar argument holds for the slope of the lines that join $B_1$ and $B_2$ with $(1,1)$. Therefore, the inequality given in Equation (\ref{Eqn: xy-inequality}) holds.
\end{proof}

In order to understand the impact of this transformation on the performance of the network, let us now analyze the KL Divergence at some arbitrary operating point $B = (\hat{x}, \hat{y})$ due to a BSC with transition probability $\rho$ operating on the sensor operating point $A$. In the following lemma, we show that the KL Divergence decreases with increasing $\rho$.

\begin{lemma-app}
    Given the sensor operating point $A = (x,y)$, let $B = (\hat{x}, \hat{y})$ denote the transformed operating point due to a BSC with transition probability $\rho$. Let $D_B$ denote the KL Divergence at $B$. Then, for $0 \leq \rho \leq \frac{1}{2}$, $D_B$ is a monotonically decreasing function of $\rho$ whenever $y \geq x$.
    \label{Lemma: ROC-Operating-Point-Transformation-KLD}
\end{lemma-app}
\begin{proof}
    The KL Divergence at the transformed operating point $B$ is defined as follows.
    \begin{equation}
        D_B =  \displaystyle \hat{x} \log \frac{\hat{x}}{\hat{y}} + (1 - \hat{x}) \log \frac{1 - \hat{x}}{1 - \hat{y}}.
    \end{equation}

    Differentiating $D_B$ with respect to $\rho$, we have
    \begin{equation}
        \begin{array}{lcl}
            \displaystyle \frac{dD_B}{d\rho} & = & \displaystyle (1 - 2y) \left[ \frac{1 - \hat{x}}{1 - \hat{y}} - \frac{\hat{x}}{\hat{y}} \right]
            \\
            && \displaystyle \qquad - (1 - 2x) \left[ \log \left( \frac{1 - \hat{x}}{1 - \hat{y}} \right) - \log \left( \frac{\hat{x}}{\hat{y}} \right) \right]
            \\
            \\
            & = & \displaystyle \left( \frac{1 - \hat{x}}{1 - \hat{y}} - \frac{\hat{x}}{\hat{y}} \right) \left[ (1 - 2y) \right.
            \\
            && \qquad \displaystyle \left. - (1 - 2x) \left\{ \frac{\log \left( \displaystyle \frac{1 - \hat{x}}{1 - \hat{y}} \right) - \log \left( \displaystyle \frac{\hat{x}}{\hat{y}} \right)}{\displaystyle \frac{1 - \hat{x}}{1 - \hat{y}} - \frac{\hat{x}}{\hat{y}}} \right\} \right]
        \end{array}
        \label{Eqn: diff_KLD_rho}
    \end{equation}

    From Lemma \ref{Lemma: ROC-Operating-Point-Transformation}, we have
    \begin{equation}
        \displaystyle \frac{\hat{x}}{\hat{y}} \leq \frac{1 - \hat{x}}{1 - \hat{y}}.
    \end{equation}
    In other words, $\displaystyle \frac{1 - \hat{x}}{1 - \hat{y}} - \frac{\hat{x}}{\hat{y}} \geq 0$. Therefore, the sign of $\displaystyle \frac{d D_B}{d\rho}$ does not depend on $\displaystyle \frac{1 - \hat{x}}{1 - \hat{y}} - \frac{\hat{x}}{\hat{y}}$.

    Also, using the properties of the $\log(\cdot)$ function, 
    we have
    \begin{equation}
        \displaystyle \frac{1 - \hat{y}}{1 - \hat{x}} \leq \frac{\log \left( \displaystyle \frac{1 - \hat{x}}{1 - \hat{y}} \right) - \log \left( \displaystyle \frac{\hat{x}}{\hat{y}} \right)}{\displaystyle \frac{1 - \hat{x}}{1 - \hat{y}} - \frac{\hat{x}}{\hat{y}}} \leq \frac{\hat{y}}{\hat{x}}.
        \label{Eqn: logarithm-ineq}
    \end{equation}

    Substituting Equation (\ref{Eqn: logarithm-ineq}) in Equation (\ref{Eqn: diff_KLD_rho}), we have
    \begin{equation}
        \begin{array}{lcl}
            \displaystyle \left( \frac{1 - \hat{x}}{1 - \hat{y}} - \frac{\hat{x}}{\hat{y}} \right)^{-1} \frac{dD_B}{d\rho} & \leq & \displaystyle (1 - 2y)  - (1 - 2x) \left\{ \frac{1 - \hat{y}}{1 - \hat{x}} \right\}
            \\
            \\
            \displaystyle \left( \frac{1 - \hat{x}}{1 - \hat{y}} - \frac{\hat{x}}{\hat{y}} \right)^{-1} \frac{dD_B}{d\rho} & \leq & \displaystyle \frac{-(y - x)}{1 - \hat{x}}
        \end{array}
    \end{equation}

    Since $\displaystyle \frac{dD_B}{d \rho} \leq 0$, $D_B$ is a monotonically decreasing function of $\rho$, for all $\rho \in [0,\frac{1}{2}]$.

\end{proof}

Having analyzed the impact of BSCs on the ROC, let us now shift our focus on finding those quantizers that maximize the KL Divergence at the sensor or the FC. Given any operating point $A = (x,y)$ at the sensor, we investigate the behavior of $D_A$ with respect to $y$, for a fixed value of $x$.

\begin{lemma-app}
    The optimal quantizer always lies on the boundary of the set of all feasible quantizer designs.
    \label{Lemma: Boundary - Optimal Quantizer}
\end{lemma-app}
\begin{proof}
    For a fixed value of $x$, we differentiate $D_A$ with respect to $y$ as follows.
    \begin{equation}
        \displaystyle \left. \frac{d D_A}{dy} \right|_{\mbox{fixed $x$}} =  \frac{1-x}{1-y} - \frac{x}{y}
    \end{equation}

    From Lemma \ref{Lemma: ROC-Operating-Point-Transformation}, we have $\displaystyle \left. \frac{d D_A}{dy} \right|_{\mbox{fixed $x$}} \geq 0$. In other words, $D_A$ is a monotonically increasing function of $y$, for a fixed value of $x$. Hence, we are always interested in quantizer rules whose operating points lie on the boundary of the set of all feasible quantizers.
\end{proof}

In summary, the sensor operating point chosen on the LRT boundary slides towards the point $(\frac{1}{2},\frac{1}{2})$ as the channel deteriorates (increasing $\rho$), which, in turn, degrades the KLD of any decision rule $\gamma$ to zero. Therefore, we address the problem of finding the operating point on the boundary which maximizes $\mathcal{D}_{FC}$, where the boundary is dictated by the Eve's constraint $\mathcal{D}_E = \alpha$ and the boundary of $\boldsymbol\Gamma = \{ \Gamma_1, \cdots, \Gamma_N \}$.


\section{Proof for Theorem \ref{Thrm: conjecture-theorem} \label{Proof: Thrm-conjecture}}
To show that $D$ is a convex function of $x$ in the presence of a constraint on Eve, we investigate the second-order differential of $D$ with respect to $x$.
	
	The closed-form expression for the first-order differential of $D$ with respect to $x$
	\begin{equation}
		\begin{array}{lcl}
			\displaystyle \frac{dD}{dx} & = & \displaystyle \frac{d}{dx} \left[ x \log \frac{x}{y} + (1-x) \log \left( \frac{1-x}{1-y} \right) \right]
			\\
			\\
			& = & \displaystyle \left( \frac{1-x}{1-y} - \frac{x}{y} \right) \frac{dy}{dx} - \left[ \log \left( \frac{1-x}{1-y} \right) - \log \left( \frac{x}{y}\right) \right].
		\end{array}
		\label{Eqn: diff_D_FC_x}
	\end{equation}
	
	The second-order differential of $D$ can therefore be obtained by differentiating Equation (\ref{Eqn: diff_D_FC_x}) with respect to $x$ as follows.
	\begin{equation}
		\begin{array}{lcl}
			\displaystyle \frac{d^2D}{dx^2} & = & \displaystyle \left( \frac{1-x}{1-y} - \frac{x}{y} \right) \frac{d^2y}{dx^2} + \left( \frac{1-x}{(1-y)^2} + \frac{x}{y^2} \right) \left(\frac{dy}{dx}\right)^2
            \\
            \\
            && \quad \displaystyle - 2 \left( \frac{1}{y} + \frac{1}{1-y} \right) \frac{dy}{dx} + \left( \frac{1}{x} + \frac{1}{1-x} \right).
		\end{array}
		\label{Eqn: diff2_D_FC_x}
	\end{equation}
	
	Note that the first term in Equation \eqref{Eqn: diff2_D_FC_x} can be rewritten as follows.
	
	\begin{equation}
		\begin{array}{l}
			\displaystyle \left( \frac{1-x}{1-y} - \frac{x}{y} \right) \frac{d^2y}{dx^2} \ = \ \displaystyle \frac{\displaystyle \left( \frac{1-x}{1-y} - \frac{x}{y} \right)}{\displaystyle \left( \frac{1-\hat{x}}{1-\hat{y}} - \frac{\hat{x}}{\hat{y}} \right)}  \left( \frac{1-\hat{x}}{1-\hat{y}} - \frac{\hat{x}}{\hat{y}} \right) \frac{d^2y}{dx^2}
			\\
			\\
			\qquad \qquad \ = \ \displaystyle \frac{\hat{y}(1-\hat{y})}{y(1-y)} \cdot \frac{1}{(1-2\rho)} \cdot \left( \frac{1-\hat{x}}{1-\hat{y}} - \frac{\hat{x}}{\hat{y}} \right) \frac{d^2y}{dx^2}
		\end{array}
		\label{Eqn: Temp-1}
	\end{equation}
	
	Note that Equation \eqref{Eqn: Temp-1} allows us to use the necessary condition for the operating point $(x,y)$ to lie on the Eve's constraint curve $D_E = \tilde{\alpha}$, as given in Equation \eqref{Eqn: diff2_y_x}. Therefore, we substitute Equation (\ref{Eqn: diff2_y_x}) from the Lemma \ref{Lemma: Eve_constraint} in Equation (\ref{Eqn: Temp-1}), and use this in Equation (\ref{Eqn: diff2_D_FC_x}) to have the following.
	\begin{equation}
		\displaystyle \frac{d^2D}{dx^2} = T_1 \left(\frac{dy}{dx}\right)^2 - 2 T_2 \frac{dy}{dx} + T_3
		\label{Eqn: diff2_D_FC_x_constraint}
	\end{equation}
	where
	
	\begin{subequations}
		\begin{equation}
			T_1 = \left( \frac{1-x}{(1-y)^2} + \frac{x}{y^2} \right) - \frac{\hat{y}(1-\hat{y})}{y(1-y)} \left( \frac{1-\hat{x}}{(1-\hat{y})^2} + \frac{\hat{x}}{\hat{y}^2} \right)
	    	\label{Eqn: T-1}
	    \end{equation}
	    \begin{equation}
	    	T_2 = \left( \frac{1}{y} + \frac{1}{(1-y)} \right) - \frac{\hat{y}(1-\hat{y})}{y(1-y)} \left( \frac{1}{\hat{y}} + \frac{1}{(1-\hat{y})} \right)
	    	\label{Eqn: T-2}
		\end{equation}
		\begin{equation}
			T_3 = \left( \frac{1}{x} + \frac{1}{(1-x)} \right) - \frac{\hat{y}(1-\hat{y})}{y(1-y)} \left( \frac{1}{\hat{x}} + \frac{1}{(1-\hat{x})} \right).
	    	\label{Eqn: T-3}
		\end{equation}
	\end{subequations}
	
	It is easy to show that $T_2 = 0$.
	
	So, let us first consider $T_1$. Expanding Equation (\ref{Eqn: T-1}), we have
	\begin{equation}
		\begin{array}{lcl}
			T_1 & = & \displaystyle \frac{1}{y^2 (1-y)^2 \hat{y} (1 - \hat{y})} \left[ (x\hat{y} - \hat{x}y) - (x\hat{y}^2 - \hat{x}y^2) \right.
            \\
            \\
            && \qquad \qquad \quad \displaystyle \left. + y\hat{y}\left\{(y - \hat{y}) - 2(x - \hat{x}) + 2(x\hat{y} - \hat{x}y)\right\} \right]
			\\
			\\
			& = & \displaystyle \frac{1}{y^2 (1-y)^2 \hat{y} (1 - \hat{y})} \left[ -\rho(y-x) \right.
            \\
            \\
            && \qquad \quad \displaystyle \left. - \left\{\rho^2 x - \rho y^2 + 2 \rho (1-2 \rho) xy - 2 \rho (1-2 \rho) xy^2\right\} \right.
            \\
            \\
            && \qquad \qquad \qquad \displaystyle \left. + y\hat{y} ( \rho - 2\rho x ) \right]
			\\
			\\
			& = & \displaystyle \frac{\rho (1 - \rho) (y - x) (2y - 1)}{y^2 (1-y)^2 \hat{y} (1 - \hat{y})}
		\end{array}
		\label{Eqn: T-1_expanded}
	\end{equation}
	
	Similarly, expanding Equation \ref{Eqn: T-3} for $T_3$, we have
	\begin{equation}
		\begin{array}{lcl}
			T_3 & = & \displaystyle \frac{1}{y(1-y)} \left[ \frac{y(1-y)}{x(1-x)} - \frac{\hat{y}(1-\hat{y})}{\hat{x}(1-\hat{x})} \right]
			\\
			\\
			& = & \displaystyle \frac{\rho (1 - \rho)}{y(1-y)} \cdot \frac{(y-x)(1-x-y)}{x(1-x)\hat{x}(1 - \hat{x})}
		\end{array}
		\label{Eqn: T-3_expanded}
	\end{equation}

	Substituting Equations (\ref{Eqn: T-1_expanded}) and (\ref{Eqn: T-3_expanded}) in Equation (\ref{Eqn: diff2_D_FC_x_constraint}), we simplify Equation \eqref{Eqn: diff2_D_FC_x_constraint} into the following.
	
	\begin{equation}
		\displaystyle \frac{d^2D}{dx^2} = \frac{\rho(1 - \rho)(y-x)}{y(1-y)} \cdot T_4
		\label{Eqn: diff2_D_FC_x_constraint2}
	\end{equation}
	where
	\begin{equation}
		T_4 = \displaystyle \frac{2y-1}{y\hat{y}(1-y) (1-\hat{y})}\left(\frac{dy}{dx}\right)^2 + \frac{1-x-y}{x\hat{x}(1-x)(1-\hat{x})}.
		\label{Eqn: T-4}
	\end{equation}
	
	 Note that, if $T_4 \geq 0$, $D$ is a convex function of $x$ along the Eve's constraint curve $D_E = \tilde{\alpha}$. Since we are only interested in the region where $y \geq x$ and $\rho < \displaystyle \frac{1}{2}$ for all practical purposes, we restrict our analysis of the sign of $T_4$ in this region.
	
	In order to analyze the sign of $T_4$, we divide the achievable region in the receiver-operating characteristics into three regions, as shown in Figure \ref{Fig: ROC regions}.
	\begin{figure}[!t]
		\centering
	    \includegraphics[trim=12cm 0 0 5cm, clip=true, width=3.3in]{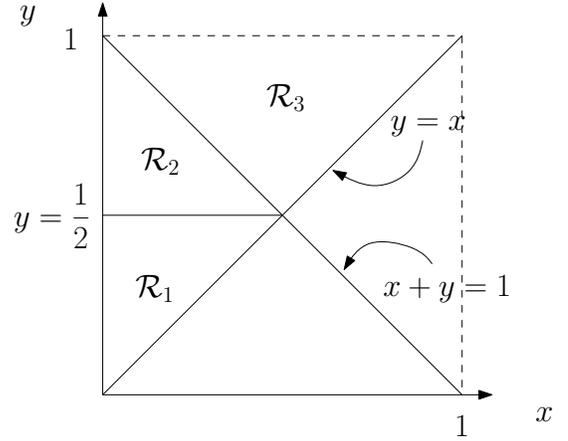}
	    \caption{Partition of ROC into three regions}
	    \label{Fig: ROC regions}
	\end{figure}
	
	\begin{equation}
		\begin{array}{l}
			\mathcal{R}_1: \displaystyle \left( y \leq \frac{1}{2} \right) \& \left( x + y \leq 1 \right)
			\\
			\\
			\mathcal{R}_2: \displaystyle \left( y \geq \frac{1}{2} \right) \& \left( x + y \leq 1 \right)
			\\
			\\
			\mathcal{R}_3: \displaystyle \left( y \geq \frac{1}{2} \right) \& \left( x + y \geq 1 \right).
		\end{array}
	\end{equation}
	
	Obviously, in region $\mathcal{R}_2$, $2y - 1 \geq 0$ and $1 - x - y \geq 0$. Therefore, $\displaystyle \frac{d^2D}{dx^2} \geq 0$. Henceforth, we analyse the sign of $T_4$ in the remaining regions $\mathcal{R}_1$ and $\mathcal{R}_3$.
	
	\paragraph*{Region $\mathcal{R}_1$} In this region, $2y - 1 \leq 0$. Therefore, we use the upper bound on $\frac{dy}{dx}$, presented in Equation (\ref{Eqn: diff_y_x_bound}), to find the sign of $T_4$ as follows.
	
	Substituting Equation (\ref{Eqn: diff_y_x_bound}) in Equation (\ref{Eqn: T-4}), we have
	\begin{equation}
		\begin{array}{l}
			T_4 \ \geq \ \displaystyle \frac{1-x-y}{x\hat{x}(1-x)(1-\hat{x})} - \frac{1-2y}{y\hat{y}(1-y) (1-\hat{y})} \frac{y\hat{y}}{x\hat{x}}
			\\
			\\
			\quad \ = \ \displaystyle \frac{1}{x\hat{x}} \left[ \frac{1-x-y}{(1-x)(1-\hat{x})} - \frac{1-2y}{(1-y)(1-\hat{y})} \right]
			\\
			\\
			\quad \ = \ \displaystyle \frac{1}{x\hat{x}(1-x)(1-\hat{x})(1-y)(1-\hat{y})} \cdot
            \\
            \\
            \quad \left[ (1-x-y)(1-y)(1-\hat{y}) - (1-y)(1-x)(1-\hat{x}) \right.
            \\
            \\
            \qquad \qquad \qquad \qquad \qquad \qquad \left. + y(1-x)(1-\hat{x}) \right]
		\end{array}
		\label{Eqn: T-4_R1}
	\end{equation}
	
	Equation (\ref{Eqn: T-4_R1}) can be rearranged as follows.
	\begin{equation}
		T_4 \geq \displaystyle \frac{(y-x) \left[ y (1-\rho) + (1-2\rho) \left\{ (2y-1)(1-x) - y^2 \right\} \right]}{x\hat{x}(1-x)(1-\hat{x})(1-y)(1-\hat{y})}
		\label{Eqn: T-4_R1_expanded}
	\end{equation}
	
	Since $1-x-y \geq 0$ in region $\mathcal{R}_1$, we have $1-x \geq y$. Therefore, substituting this inequality in Equation (\ref{Eqn: T-4_R1_expanded}), we have
	\begin{equation}
		\begin{array}{lcl}
			T_4 & \geq & \displaystyle \frac{(y-x) \left[ y (1-\rho) + (1-2\rho) \left\{ (2y-1)y - y^2 \right\} \right]}{x\hat{x}(1-x)(1-\hat{x})(1-y)(1-\hat{y})}
			\\
			\\
			& = &  \displaystyle \frac{(y-x) \left[ y (1-\rho) + (1-2\rho) y(y-1) \right]}{x\hat{x}(1-x)(1-\hat{x})(1-y)(1-\hat{y})}
			\\
			\\
			& = & \displaystyle \frac{(y-x) y \left[ (1-\rho) + (1-2\rho)(y-1) \right]}{x\hat{x}(1-x)(1-\hat{x})(1-y)(1-\hat{y})}
			\\
			\\
			& = & \displaystyle \frac{(y-x) y \hat{y}}{x\hat{x}(1-x)(1-\hat{x})(1-y)(1-\hat{y})}
			\\
			\\
			& \geq & 0.
		\end{array}
		\label{Eqn: T-4_R1_final}
	\end{equation}
	
	\paragraph*{Region $\mathcal{R}_3$} In this region, since $2y-1 \geq 0$, we use the lower bound on $\frac{dy}{dx}$, presented in Equation (\ref{Eqn: diff_y_x_bound}), in order to find the sign of $T_4$.
	
	Substituting Equation (\ref{Eqn: diff_y_x_bound}) in Equation (\ref{Eqn: T-4}), we have
	\begin{equation}
		\begin{array}{lcl}
			T_4 & \geq & \displaystyle \frac{2y-1}{y\hat{y}(1-y)(1-\hat{y})} \frac{(1-y)(1-\hat{y})}{(1-x)(1-\hat{x})} - \frac{x+y-1}{x\hat{x}(1-x)(1-\hat{x})}
			\\
			\\
			& = & \displaystyle \frac{1}{(1-x)(1-\hat{x})} \left[ \frac{2y-1}{y\hat{y}} - \frac{x+y-1}{x\hat{x}} \right].
			\\
			\\
			& = & \displaystyle \frac{(y\hat{y} - x\hat{x}) - y(y\hat{y} - x\hat{x}) -xy(\hat{y} - \hat{x})}{x\hat{x}y\hat{y}(1-x)(1-\hat{x})}
			\\
			\\
			& = & \displaystyle \frac{1}{x\hat{x}y\hat{y}(1-x)(1-\hat{x})} \left[ (1-y) \left\{ \rho(y-x) \right. \right.
            \\
            \\
            && \qquad \qquad \displaystyle \left. \left. + (1-2\rho) (y^2 - x^2) \right\} -xy (1-2\rho)(y -x) \right]
			\\
			\\
			& = & \displaystyle \frac{1}{x\hat{x}y\hat{y}(1-x)(1-\hat{x})} \left[ (y-x)\left\{\rho(1-y) \right. \right.
            \\
            \\
            && \qquad \qquad \displaystyle \left. \left.  + (1-2\rho)y(1-y) - (1-2\rho)x(1-2y) \right\} \right]
		\end{array}
		\label{Eqn: T-4_R3}
	\end{equation}
	
	Since we are only interested in the region where $y \geq x$, Equation (\ref{Eqn: T-4_R3}) can be lower-bounded as follows.
		\begin{equation}
		\begin{array}{lcl}
			T_4 & \geq & \displaystyle \frac{1}{x\hat{x}y\hat{y}(1-x)(1-\hat{x})} \left[ (y-x)\left\{\rho(1-y) \right. \right.
            \\
            \\
            && \qquad \qquad \displaystyle \left. \left.  + (1-2\rho)x(1-y) - (1-2\rho)x(1-2y) \right\} \right]
			\\
			\\
			& = & \displaystyle \frac{(y-x)\left[\rho(1-y) + (1-2\rho)xy \right]}{x\hat{x}y\hat{y}(1-x)(1-\hat{x})}
			\\
			\\
			& \geq & 0.
		\end{array}
		\label{Eqn: T-4_R3}
	\end{equation}
	
	Hence, for BSCs with $\rho < \frac{1}{2}$, $D$ is a convex function of $x$ along the constraint $D_E = \alpha$.


\bibliographystyle{IEEEtran}
\bibliography{IEEEabrv,references}


\end{document}